\newcommand\independent{\protect\mathpalette{\protect\independent}{\perp}} 
\def\independent#1#2{\mathrel{\rlap{$#1#2$}\mkern2mu{#1#2}}}
\newcommand{\mR}{\mathbb{R}} 
\newcommand{\mN}{\mathbb{N}}
\newcommand{\mZ}{\mathbb{Z}}
\newcommand{\pp}{\mathbb{P}}
\newcommand{\E}{\mathbb{E}}
\newcommand{\cF}{{\cal F}}
\newcommand{\bracket}[1]{\ensuremath{\left\langle#1\right\rangle}}
\newcommand{\rank}{\mathrm{rank}}
\newcommand{\lin}{{\cal L}}
\newcommand{\lino}{{\cal L}(\Omega,\cF,\pp)}
\newcommand{\mmse}{{\text{mmse}}}
\newcommand{\vecu}{\boldsymbol{u}}
\newcommand{\vecv}{\boldsymbol{v}}
\newcommand{\vecx}{\boldsymbol{x}}
\newcommand{\vecy}{\boldsymbol{y}}
\newcommand{\col}{\mathsf{C}}
\newcommand{\row}{\mathsf{R}}
\newcommand{\normal}{\mathsf{N}}
\newcommand{\var}{{\text{Var}}}
\theoremstyle{definition}
\newtheorem{definition}{Definition}
\theoremstyle{plain}
\newtheorem{thm}{Theorem}
\theoremstyle{plain}
\newtheorem{prop}{Proposition}
\theoremstyle{plain}
\theoremstyle{plain}
\theoremstyle{plain}
\theoremstyle{thm}
\newtheorem{remark}{Remark}
\theoremstyle{discussion}
\theoremstyle{plain}
\begin{document}

\title{Multi Terminal Probabilistic Compressed Sensing}
\author{Saeid Haghighatshoar
\\ EPFL, Lausanne, Switzerland\\
Email: saeid.haghighatshoar@epfl.ch}
\maketitle 

\begin{abstract}
In this paper, the \textit{`Approximate Message Passing'} (AMP) algorithm, initially developed for compressed sensing of signals under i.i.d. Gaussian measurement matrices, has been extended to a multi-terminal setting (MAMP algorithm). It has been shown that similar to its single terminal counterpart, the behavior of MAMP algorithm is fully characterized by a \textit{`State Evolution'} (SE) equation for large block-lengths. This  equation has been used to obtain the rate-distortion curve of a multi-terminal memoryless source. It is observed that by spatially coupling the measurement matrices, the rate-distortion curve of MAMP algorithm undergoes a phase transition, where the measurement rate region corresponding to a low distortion (approximately zero distortion) regime is fully characterized by the joint and conditional R\'enyi information dimension (RID) of the multi-terminal source. This measurement rate region is very similar to the rate region of the Slepian-Wolf distributed source coding problem where the RID plays a role similar to the discrete entropy.  

Simulations have been done to investigate the empirical behavior of MAMP algorithm. It is observed that simulation results match very well with predictions of SE equation for reasonably large block-lengths.
\end{abstract}

\begin{IEEEkeywords}
Approximate message passing (AMP), Gaussian measurement matrices, Spatial coupling, Multi-Terminal Approximate Message Passing (MAMP), R\'enyi information dimension, Multi-terminal (distributed) compressed sensing.
\end{IEEEkeywords}

\section{Introduction}
Let $(x^n,y^n)$ be a realization of a two terminal memoryless sources $(X,Y)$ with a probability distribution $p_{X,Y}$ over $\mR^2$ and assume that one is  interested to recover the signal in both terminals $(x^n, y^n)$ by taking sufficiently many linear measurements $\vecu=A x^n$ and $\vecv=B y^n$, where $A$ and $B$ denote the measurement matrices in $T_X$ and $T_Y$ respectively. In particular, it is implicitly assumed that the measurements are taken separately from each terminal whereas for the recovery, one has access to the measurements $(\vecu,\vecv)$ from both terminals. 

This problem in its general multi-terminal form is ubiquitous in different distributed processing systems and specially in ad hoc sensor networks where a collection of sensors measure a distributed environmental signal like temperature, humidity, etc. One can imagine a particular sensor as a terminal which takes a collection of linear measurements and transmits the gathered data to a data fusion center by routing them via the other sensors. Because of limited communication and low processing power of sensors, it is difficult to take joint measurements from two or several different terminals even if they are very closed to one another. Therefore, one can reasonably assume that the measurements are taken separately from each terminal and processed jointly in a data fusion center to recover the distributed signal. Usually there is a high correlation among terminals and one can exploit this redundancy to reduce the required number of measurements. In particular, in a very low energy scenario like a sensor network this results in a saving in the energy consumption of  devices which in turn, increases the life time of the network. 

There are two different kinds of correlation that should be considered: temporal and spatial. In a scenario like sensor networks, temporal correlations result because of the slow changes of the natural phenomenon like temperature, humidity, etc. Temporal correlations usually can be moderated by suitable sampling time and preprocessing of the signal before transmission. Spatial correlations are more important and much more difficult to deal with. If the sensors are densely distributed in the environment for precise data acquisition, the resulting measurements from different terminals will be highly  redundant thus the network energy resources are  wasted without any significant gain. Therefore, it is always desirable to reduce the number of sensors to a minimum possible and still be able to recover the environmental distributed signal. Compared with a densely distributed sensor network, this is as if no sensor is assigned to some of the terminals and as a result the measurement rate from those terminals is $0$.  In both cases, one needs to characterize the required measurement rate region of the terminals for low-distortion recovery.

In this paper, we address a two terminal scenario for a memoryless distributed source $(X^n,Y^n)$. Memoryless property of the source implies that there is no temporal correlation between samples of the signals in each terminal. The spatial correlation between the sources is modeled by assuming that the samples of the signals $(X_i,Y_i)$ are generated by a probability distribution $p_{X,Y}$.  The extension to more than two terminals is also straightforward. 

This problem has been vastly studied under different signal structures and recovery algorithms (in particular \cite{DCS1,DCS2}) as an extension of the traditional single terminal compressed sensing introduced in \cite{CS1,CS2}. Specially, it has been attempted to make a connection between multi-terminal compressed sensing and the distributed source coding (Slepian-Wolf) counterpart in information theory (please refer to \cite{rice} for extra refrences). 

A  closely related work to our paper, is  the \textit{`analog to analog'} (A2A) compression problem first studied in \cite{verdu}, where it was proved that under some regularity conditions on encoder and decoder, the required measurement rate in order to recover the source with a negligible block error probability is given by the upper R\'enyi information dimension (RID) of the source. The results was extended to prove the noise stability of the decoder \cite{WV2}. In \cite{HAT1,HAT2}, it was proved that under a much weaker entropic distortion measure (compared with block error probability or MSE), a measurement rate of at least R\'enyi information dimension is still necessary to roughly capture the information of the source. Moreover, the polarization idea \cite{source_polarcode,channel_polarcode} was exploited to construct a family of deterministically truncated Hadamard matrices that universally capture the information of all probability distributions with a given RID.

In \cite{montanari}, using the spatially coupled  Gaussian matrices and the prediction of \cite{mezard1,mezard2}, it was rigorously proved that the measurement rate as large as the RID of the source is sufficient to stably recover the source with a negligible mean square error (MSE)  by running a feasible complexity approximate message passing algorithm (AMP) first developed in \cite{maleki} and rigorously analyzed in \cite{bayati}. 

In \cite{HAT2}, a characterization was given of the the measurement rate region of a memoryless multi-terminal signal in order to fully capture the information of the source in terms of the joint and conditional RID which was in spirit very similar to Slepian-Wolf (S\&W) region for  distributed source coding with the discrete entropy replaced by the RID and this region was also shown to be tight.

In this paper, we extend the results in \cite{montanari, HAT2} by developing a multi-terminal variant of AMP (MAMP) algorithm to study the multi-terminal compressed sensing of a well-behaved class of probability distributions (linearly correlated signals) for which the joint and the conditional RID's are well-defined. 
We use random Gaussian matrices (independent across different terminals) to take measurements and run the MAMP algorithm to reconstruct the source. We prove that the behavior of MAMP algorithm can be fully characterized by a two dimensional state\footnote{In general the number of states is equal to the number of terminals.} and at each iteration, this state changes according to an explicit state evolution (SE) equation. We use this SE equation to obtain the rate distortion curve of the source, where we use the mean square error (MSE) as the distortion measure. We also show that after spatially coupling of measurement matrices in each terminal, the low distortion measurement rate region can be fully characterized by the joint and the conditional RID's as predicted by \cite{HAT2}. 


\section{R\'enyi Information Dimension and Linearly Correlated Multi-terminal Sources}\label{RID_subsection}
Let $X$ be a scalar random variable with a probability distribution $p_X$ over $\mR$. The upper and lower RID of $X$ are defined by
\begin{align*}
\overline{d}(p_X)=\overline{d}(X)=\limsup_{q \to \infty} \frac{H([X]_q)}{\log_2(q)},\\
\underline{d}(p_X)=\underline{d}(X)=\liminf_{q \to \infty} \frac{H([X]_q)}{\log_2(q)},
\end{align*}
where for $x\in \mR$ and $q\in \mN$, $[x]_q=\frac{\lfloor q x \rfloor}{ q}$ denotes the quantization of $x$ by spacing $\frac{1}{q}$ and where  $\lfloor x \rfloor= \max \{k\in \mZ : k \leq x\}$.  If both limits coincide then we define $d(X)=\overline{d}(X)=\underline{d}(X)$. A parameter related to the RID is the MMSE dimension of $X$ defined in \cite{WV3}. Let 
\begin{align*}
\text{mmse}(s)=\E (X - \E (X|Y))^2, \ \  Y=\sqrt{s} X + Z,
\end{align*}
where $Z \sim {\normal(0,1)}$ is a  Gaussian random variable independent of $X$. The upper and lower MMSE dimension of $X$ are defined by
\begin{align*}
\overline{D}(p_X)=\overline{D}(X)=\limsup_{s \to \infty} s\ \text{mmse}(s)\\
\underbar{D}(p_X)=\underline{D}(X)=\liminf_{s \to \infty} s\ \text{mmse}(s),
\end{align*}
and if both limits coincide then we define $D(X)=\overline{D}(X)=\underline{D}(X)$. In \cite{WV3}, it was proved that if $H(\lfloor X \rfloor)<\infty$ then 
\begin{align*}
\underline{D}(X) \leq \underline{d}(X) \leq \overline{d}(X) \leq \overline{D}(X).
\end{align*}
Hence, if $D(X)$ exists so does $d(X)$ and they are equal. By Lebesgue decomposition theorem, any probability distribution like $p_X$ can be decomposed as a convex combination of continuous, discrete and singular parts, i.e. 
\begin{align*}
p_X=\alpha _c p_c + \alpha_d p_d + \alpha _s p_s,
\end{align*}
where $\alpha_c+\alpha_d+\alpha_s=1$ and $p_c,p_d$ and $p_s$ denote the continuous, discrete and the singular part of the distribution. R\'enyi proved that if $\alpha_s=0$, namely if $p_X$ has no singular part, then $d(p_X)$ is well defined and is equal to $\alpha_c$, the weight of the continuous part \cite{renyi}. Moreover, it was proved in \cite{WV3} that if $\alpha_s=0$ then $D(p_X)$ also exists and is equal to $d(X)=\alpha_c$.

For simplicity, we will restrict ourselves to the space of linearly correlated random variables introduced in \cite{HAT2}, where a $k$ dimensional random vector $S$ is linearly correlated if there is a sequence of independent non singular variables $Z^n$ and a $k\times n$ matrix $A$ such that $S= A Z^n$. This space is rich enough for most of the applications. Furthermore, over this space it is possible to give a full characterization of joint and conditional RID as in \cite{HAT2}. Appendix \ref{lin_cor} contains a brief overview of linearly correlated signals and how to compute their joint and conditional RID's.

\section{Statement of the Results}
\subsection{Gaussian Measurement Matrices}
Let $n \in \mN$ and let $(x^n,y^n)=\{(x_i,y_i)\}_{i=1}^n$ be a realization of a two terminal memoryless source $(X,Y)$ with a probability distribution $p_{X,Y}$. Let $\vecu =A x^n$ and $\vecv=B y^n$ be the measurement vectors, where $A$ is an $m_x \times n$ and $B$ is an $m_y \times n$ matrix whose components are  i.i.d. zero mean Gaussian random variables with variance $\frac{1}{m_x}$ and $\frac{1}{m_y}$ respectively. We define $\rho_x=\frac{m_x}{n}$ and $\rho_y=\frac{m_y}{n}$ as the measurement rates of  the two terminals. 

In order to recover the initial signal $(x^n,y^n)$, we propose the following joint message passing algorithm which is an extension of the single terminal message passing proposed in \cite{maleki}. We assign a variable node to each component of $x^n$ and $y^n$ and a check node to every measurement. Figure \ref{graphical} shows the resulting graphical model, where the internal check node between variable nodes $(x_i,y_i)$ show the correlation resulted because of the joint distribution $p_{X,Y}$. 

\begin{figure}[t!]
  \centering
  \includegraphics[width=0.5\textwidth] {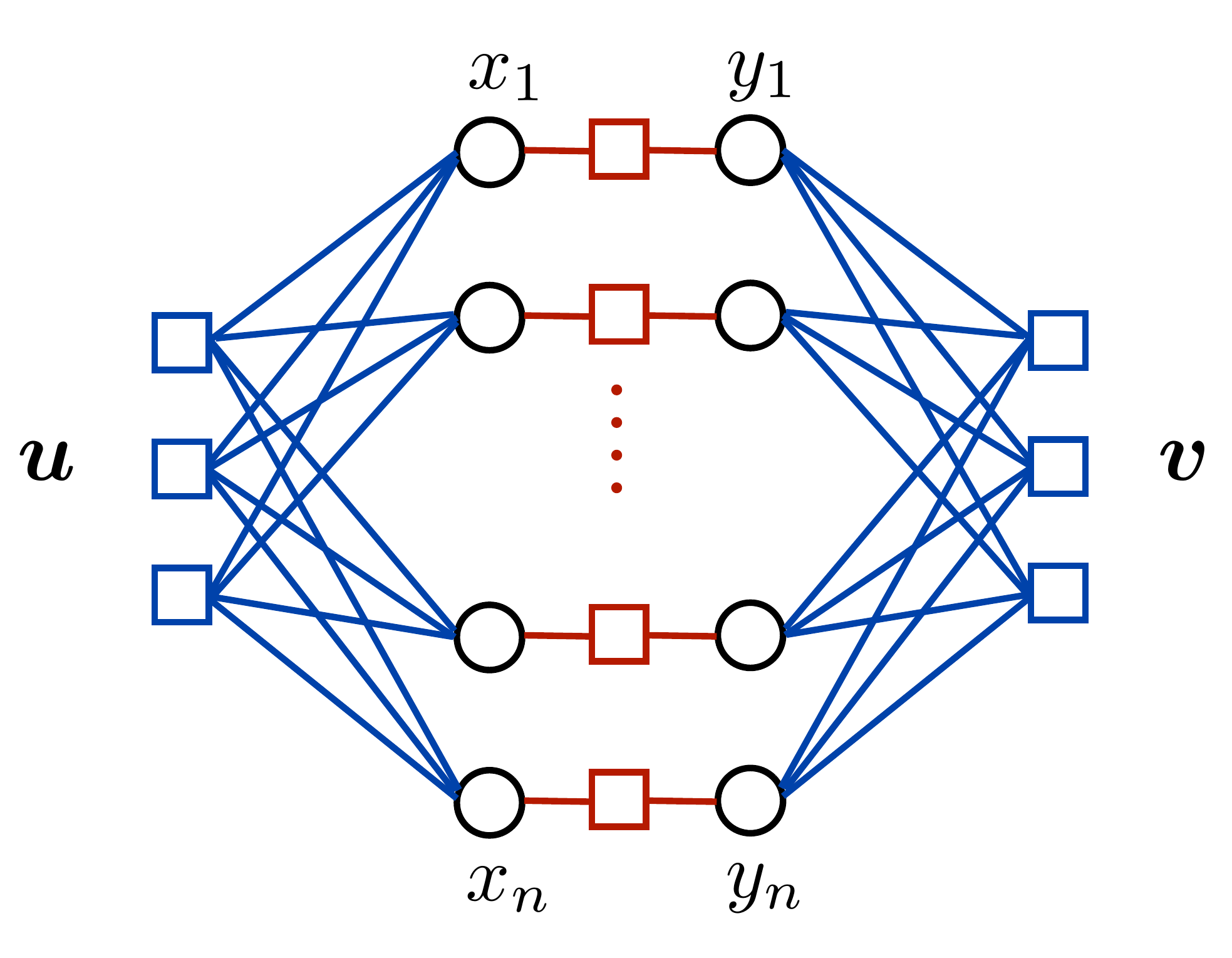}
    \vspace{-2mm}
  \caption{Graphical Model Representation for Two Terminal Compressed Sensing. The external check nodes correspond to  measurements whereas the internal check nodes between $x^n$ and $y^n$ represent the joint distribution $p_{X,Y}$ between $(X_i,Y_i)$.}
    \label{graphical}
\end{figure}

Let $a,b \in [m_x]$ and $i,j \in[n]$ be the indices for check and variable nodes in $T_X$ and let $c,d \in [m_y]$ and $k,l \in [n]$ denote the corresponding indices for $T_Y$. The multi-terminal message passing is given by
\begin{align}
r^t_{a \to i}& = u_a - \sum _{j \in [n]\backslash i} A_{a j} x^t_{j \to a},\label{MT_MP_first}\\
s^t_{c \to k}& = v_c - \sum _{l \in [n]\backslash k} B_{c l} y^t_{l \to c},\label{MT_MP_second}\\
x^{t+1}_{i\to a}&=\eta^x_t( \sum _{b \in [m_x] \backslash a} A_{b i} r^t_{b \to i}, \sum _{d \in [m_y]} B_{d i} s^t_{d \to i}),\label{MT_MP_third}\\
y^{t+1}_{k\to c}&=\eta^y_t(\sum _{b \in [m_x]} A_{b k} r^t_{b \to k}, \sum _{d \in [m_y]\backslash c} B_{k d} s^t_{d \to k})\label{MT_MP_last},
\end{align}
Notice that the only interaction between the messages in $T_X$ and $T_Y$ is via the threshold functions $\eta^x_t$ and $\eta^y_t$. In particular, if $\eta^x_t$ only depends on the first argument and if $\eta^y_t$ only depends on the second argument, this message passing algorithms is transformed to two independent message passing algorithms one running on $T_X$ and the other on $T_Y$.   
As the measurement matrices $A$ and $B$ are dense matrices with columns with $\ell_2$ norms close to $1$, it is possible to approximate the above message passing algorithm. This has been done heuristically in Appendix \ref{MT_AMP_Appendix}. The resulting MAMP (multi-terminal approximate message passing) algorithm is as follows initialized with $r^{-1}=0, s^{-1}=0$ and $x^{0}=y^{0}=0$:

\begin{align}
&r^t=\vecu- A x^t - \frac{ \bracket{\partial_1 \eta^x_t(A^* r^{t-1} + x^{t-1},B^*s^{t-1}+y^{t-1})}}{\rho_x} r^{t-1},\label{res_x}\\
&s^t=\vecv-B y^t - \frac{\bracket{ \partial_2 \eta^y_t(A^* r^{t-1} + x^{t-1},B^*s^{t-1}+y^{t-1})}}{\rho_x} s^{t-1},\label{res_y}\\
&x^{t+1}=\eta^x_t(A^* r^t + x^t,B^* s^t + y^t),\label{threshold_x}\\
&y^{t+1}=\eta^y_t(A^* r^t + x^t,B^* s^t + y^t),\label{threshold_y}
\end{align}
where $r^t \in \mR^{m_x}$ and $s^t \in \mR^{m_y}$ are the residual terms  and $x^t , y^t \in \mR^n$ are estimates of the signals at time $t$ and where for a function $f:\mR^2 \to \mR$, $\partial_1 f$ and $\partial_2 f$ denote the partial derivative of $f$ with respect to the first and the second argument respectively. Moreover, with some abuse of notation,  we assume that $\eta_t(g^l,h^l)=(\eta_t(g_1,h_1), \dots, \eta_t(g_l,h_l))$ applies component-wise. Also for an $n$ dimensional vector $u^n$, $\bracket{u^n}=\frac{1}{n}\sum_{i=1}^n u_i $ denotes the average of the elements of $u^n$.

It is also important to mention the appearance of Onsager terms in the Equations \eqref{res_x} and \eqref{res_y} as also mentioned in \cite{maleki,bayati}. This term can be considered as a second order correction for the mean field approximation of the message passing algorithm whose addition removes the correlation that exists between the fixed measurement matrices $A$ and $B$ and the estimated signal $(x^t,y^t)$ in the thermodynamic limit as the system size $n$ tends to infinity, which specially allows to completely describe the system state with a state evolution (SE) equation.

\begin{thm}\label{thm1}
Let $(x^n,y^n)$ be a realization of a memoryless source and assume that $(x^t, y^t)_{t\geq 0}$ is the output of the MAMP algorithm as in Equations \eqref{res_x}-\eqref{threshold_y} with Lipschitz continuous threshold functions $\eta^x_t$ and $\eta^y_t$. Let $\psi: \mR^2 \to \mR$ be a pseudo-Lipschitz function. Asymptotically as $n$ tends to infinity 
\begin{align*}
\frac{1}{n} \sum_{i=1}^n \psi(x_i, x^t_i)& \to \E\psi(X , \eta^x_t(X+\sqrt{\tau^t_x} Z_x , Y+ \sqrt{\tau^t_y} Z_y)),\\
\frac{1}{n} \sum_{i=1}^n \psi(y_i,y^t_i)&  \to \E\psi(Y , \eta^y_t(X+\sqrt{\tau^t_x} Z_x , Y+ \sqrt{\tau^t_y} Z_y))
\end{align*}
almost surely, where $(\tau_x^t,\tau_y^t)_{t\geq 0}$ satisfy the equation
\begin{align*}
\tau^{t+1}_x&=\sigma_x^2 + \frac{1}{\rho_x} \E(X - \eta^x_t(X+\sqrt{\tau^t_x} Z_x , Y+ \sqrt{\tau^t_y} Z_y))^2,\\
\tau^{t+1}_y&=\sigma_y^2 + \frac{1}{\rho_y} \E(Y - \eta^y_t(X+\sqrt{\tau^t_x} Z_x , Y+ \sqrt{\tau^t_y} Z_y))^2,
\end{align*}
with $\tau^{(-1)}_x=\tau^{(-1)}_y=\infty$, with $Z_x,Z_y$ zero mean unit variance Gaussian variables independent of each other and $X$ and $Y$ and with $\sigma_x^2$ and $\sigma_y^2$ denoting the measurement noise variance in $X$ and $Y$ terminals.
\end{thm}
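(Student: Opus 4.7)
The plan is to extend the conditioning technique of Bayati--Montanari (used to prove state evolution for single-terminal AMP) to the two-matrix setting, leveraging crucially the independence of $A$ and $B$ and viewing the two coupled AMP recursions as a single vector-valued recursion. The induction runs over the iteration index $t$.

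First, I would rewrite the MAMP recursion \eqref{res_x}--\eqref{threshold_y} in the abstract ``general recursion'' form of Bayati--Montanari. Setting $h^{t+1}_x = A^* r^t + x^t - x^n$ and $q^t_x = x^t - x^n$ (and analogously $h^{t+1}_y, q^t_y$ with $B$), the update reads $h^{t+1}_x = A^* m^t_x - \xi^t_x q^t_x$ with $m^t_x = r^t$ and $\xi^t_x$ matching the Onsager coefficient $\bracket{\partial_1 \eta^x_t(\cdot,\cdot)}/\rho_x$ in \eqref{res_x}; the $y$-terminal gives a parallel recursion in $B$. The coupling between the two chains enters only through the threshold step, since $q^{t+1}_x$ and $q^{t+1}_y$ are both Lipschitz functions of $(h^{t+1}_x + x^n, h^{t+1}_y + y^n)$.

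Next, I would proceed by induction on $t$. Let $\mathcal{G}_t$ be the $\sigma$-algebra generated by $(x^n, y^n)$ together with all iterates $\{h^s_x, h^s_y, q^s_x, q^s_y, m^s_x, m^s_y\}_{s \le t}$. Conditioning on $\mathcal{G}_t$ imposes a finite family of linear constraints of the form $A q^s_x = \text{known}$ and $A^* m^s_x = \text{known}$ on $A$, and similarly on $B$. By rotational invariance of i.i.d.\ Gaussian entries, the conditional law of $A$ decomposes into a deterministic projection part (a linear combination of the known iterates) plus an independent Gaussian remainder $\widetilde{A}$ supported on the orthogonal complement; the same holds for $B$ with remainder $\widetilde{B}$. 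Because $A \perp B$ marginally, the conditional remainders $\widetilde{A}$ and $\widetilde{B}$ remain independent given $\mathcal{G}_t$. The Onsager corrections $\xi^t_x, \xi^t_y$ are precisely the ones that cancel the deterministic projection-part bias, so that conditionally $h^{t+1}_x \overset{d}{=} \sqrt{\tau^{t+1}_x}\, Z_x + o(1)$ and $h^{t+1}_y \overset{d}{=} \sqrt{\tau^{t+1}_y}\, Z_y + o(1)$, where $Z_x, Z_y$ are standard Gaussian and, by the independence of the remainders, $Z_x \perp Z_y$.

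Given this characterization, the conclusion follows by plugging the Gaussian description into the threshold step: for pseudo-Lipschitz $\psi$ and Lipschitz $\eta^x_t$,
\begin{align*}
\frac{1}{n}\sum_{i=1}^n \psi(x_i, x^{t+1}_i)
&= \frac{1}{n}\sum_{i=1}^n \psi\bigl(x_i, \eta^x_t(x_i + h^{t+1}_{x,i}, y_i + h^{t+1}_{y,i})\bigr),
\end{align*}
and a standard concentration/law-of-large-numbers argument for pseudo-Lipschitz functionals of asymptotically Gaussian vectors (as in Bayati--Montanari) gives the stated limit. Specializing $\psi(x, \hat{x}) = (x-\hat{x})^2$ together with the identity $\|r^t\|_2^2/m_x \to \sigma_x^2 + \tfrac{1}{\rho_x}\E(X-\eta^x_t(\cdot,\cdot))^2$ (also obtained as part of the induction on conditional second moments) yields the SE recursion for $\tau^{t+1}_x$, and analogously for $\tau^{t+1}_y$.

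The main obstacle is the joint conditioning step: one must verify that after conditioning on $\mathcal{G}_t$, the remainders $\widetilde{A}$ and $\widetilde{B}$ are \emph{jointly} independent Gaussian matrices on their respective orthogonal complements, so that $Z_x$ and $Z_y$ in the limit are genuinely independent. This is the multi-terminal analogue of the single-matrix conditioning lemma and requires care because the constraints defining the conditioning couple both matrices through the threshold functions $\eta^x_t, \eta^y_t$. The independence of $A$ and $B$ as marginals is what keeps the cross-terms $\E[\widetilde{A}^* r^t (\widetilde{B}^* s^t)^*\mid \mathcal{G}_t]$ equal to zero and is thus the key structural ingredient that makes the two-dimensional SE a valid description of the system.
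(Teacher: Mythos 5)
Your proposal follows essentially the same route as the paper, which simply invokes Bolthausen's conditioning technique from Bayati--Montanari applied jointly to both terminals; your write-up fills in exactly that argument, including the key point that the independence of $A$ and $B$ keeps the conditional Gaussian remainders independent so that $Z_x$ and $Z_y$ are independent in the limit. The approach and its crucial structural ingredient match the paper's intended proof.
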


\begin{proof}
Proof follows from the Bolthausen's conditioning technique used in \cite{bayati} with the only difference that one should apply the conditioning  to both terminals instead of a single terminal. 
\end{proof}

\begin{remark}
Theorem \ref{thm1} provides a single letter characterization of the asymptotic behavior of the MAMP, in the sense that to estimate a specific variable $(X_k,Y_k)$ the effect of all other variables is equivalent to adding a Gaussian noise with variance $(\tau_t^x,\tau^y_t)$. Moreover,  replacing $\psi(a,b)=(a-b)^2$ one gets the mean square error (MSE) of the estimator 
\begin{align*}
\frac{\|x^{t+1} - x\|_2^2}{n} \to \E(X - \eta^x_t(X+\sqrt{\tau^t_x} Z_x , Y+ \sqrt{\tau^t_y} Z_y))^2,\\
\frac{\|y^{t+1} - y\|_2^2}{n} \to \E(Y - \eta^y_t(X+\sqrt{\tau^t_x} Z_x , Y+ \sqrt{\tau^t_y} Z_y))^2.
\end{align*}
\end{remark}

We will also consider a noiseless case where $\sigma_x=\sigma_y=0$ which using the SE equation implies that the empirical  error after $t$ iteration is given by $\rho_x\tau_x^t$ and $\rho_y \tau_y^t$. One can also simply check that choosing $(\eta^x_t,\eta^y_t)$ to be the MMSE estimator minimizes the resulting error. We will always assume that the distribution of the signal is known and we will use the MMSE estimator for $(\eta^x_t,\eta^y_t)$, thus the resulting SE equation is 
\begin{align}
\tau^{t+1}_x&= \frac{1}{\rho_x} \mmse(X | X+\sqrt{\tau^t_x} Z_x , Y+ \sqrt{\tau^t_y} Z_y)\label{se_mmse1}\\
\tau^{t+1}_y&=\frac{1}{\rho_y} \mmse(Y | X+\sqrt{\tau^t_x} Z_x , Y+ \sqrt{\tau^t_y} Z_y)\label{se_mmse2}.
\end{align}

The behavior of MAMP depends on the stable set of the SE equation.  Proposition \ref{stable_fixed} states that for the special choice of MMSE estimators for $\eta^x_t$ and $\eta^y_t$, this stable set is a fixed point.
\begin{prop}\label{stable_fixed}
For a given $\rho_x,\rho_y$ and starting from $\tau_x^{(-1)}=\tau_y^{(-1)}=\infty$, the state vector $(\tau^t_x,\tau^t_y)$ given by SE equations in \eqref{se_mmse1}, \eqref{se_mmse2} converges to a well-defined fixed point.
\end{prop}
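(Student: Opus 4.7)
The plan is to exhibit the SE iteration as the orbit of a coordinatewise monotone map and then invoke the monotone convergence theorem. Write
\begin{align*}
F_x(\tau_x,\tau_y) &= \tfrac{1}{\rho_x}\,\mmse\bigl(X \,\big|\, X+\sqrt{\tau_x}\,Z_x,\, Y+\sqrt{\tau_y}\,Z_y\bigr),\\
F_y(\tau_x,\tau_y) &= \tfrac{1}{\rho_y}\,\mmse\bigl(Y \,\big|\, X+\sqrt{\tau_x}\,Z_x,\, Y+\sqrt{\tau_y}\,Z_y\bigr),
\end{align*}
so that \eqref{se_mmse1}--\eqref{se_mmse2} read $(\tau_x^{t+1},\tau_y^{t+1}) = F(\tau_x^t,\tau_y^t)$ with $F=(F_x,F_y)$ defined on $[0,\infty]^2$.

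The first step is to establish that $F_x$ and $F_y$ are coordinatewise non-decreasing. This is the standard \emph{more noise cannot help} principle: whenever $\tau_x \le \tau_x'$, one may realize $X+\sqrt{\tau_x'}\,Z_x'$ in distribution as $X+\sqrt{\tau_x}\,Z_x + \sqrt{\tau_x' - \tau_x}\,W$ with $W$ an independent standard Gaussian, so the noisier observation is a stochastic degradation of the less noisy one, and by the tower property the MMSE of any target variable given the degraded observation is at least as large as given the cleaner one. The same argument handles monotonicity in $\tau_y$. At the boundary, the corresponding observation becomes uninformative, so $F_x(\infty,\tau_y) = \var(X)/\rho_x$ and $F_y(\tau_x,\infty) = \var(Y)/\rho_y$, which are finite under the standing assumption that $X$ and $Y$ have finite second moments.

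The second step is to run the iteration from $(\tau_x^{-1},\tau_y^{-1}) = (\infty,\infty)$. One obtains $(\tau_x^0,\tau_y^0) = (\var(X)/\rho_x,\,\var(Y)/\rho_y) \le (\infty,\infty)$ componentwise; by induction, using coordinatewise monotonicity of $F$, the inequality $\tau^{t}\le\tau^{t-1}$ propagates to $\tau^{t+1} = F(\tau^t) \le F(\tau^{t-1}) = \tau^t$. Hence $(\tau_x^t)_t$ and $(\tau_y^t)_t$ are non-increasing sequences in $[0,\infty)$, and since MMSE is nonnegative they are bounded below by $0$; by monotone convergence each one converges to a limit $\tau_x^\star,\tau_y^\star \in [0,\infty)$.

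Finally, to upgrade the limit to a fixed point I would invoke continuity of $F$ on $[0,\infty)^2$. Continuity of the conditional variance in the noise variances is a dominated convergence argument (the MMSE is an $L^2$ projection onto Gaussian channels that vary continuously in $\tau_x,\tau_y$, and $\var(X),\var(Y)$ provide an integrable envelope), so passing to the limit in $\tau^{t+1}=F(\tau^t)$ yields $(\tau_x^\star,\tau_y^\star)=F(\tau_x^\star,\tau_y^\star)$, a well-defined fixed point. The main technical obstacle is the careful verification of joint monotonicity and continuity of the MMSE in the two noise levels simultaneously, but both properties reduce cleanly to standard facts about conditional expectation along a Gaussian channel and require no new machinery.
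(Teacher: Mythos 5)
Your argument is essentially the paper's own proof: monotonicity of the SE map in $(\tau_x,\tau_y)$ (your degradation argument is exactly the data-processing inequality the paper invokes), the base case $\tau^0 \le \tau^{(-1)}=\infty$, and induction to get a non-increasing, nonnegative sequence that converges. The only difference is that you additionally spell out the continuity of the MMSE map needed to conclude the limit is actually a fixed point, a step the paper leaves implicit; that addition is correct and harmless.
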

\begin{proof}
It is sufficient to prove that the resulting sequence is non-increasing thus converging to a well-defined fixed point. We use induction on $t$. For $t=0$, this  obviously holds because $\tau^0_x \leq \frac{\E(X^2)}{\rho_x} <\tau_x^{(-1)}=\infty$ and the same holds for $\tau_y^0$. Moreover, one can simply check that from the Data Processing inequality  $(\tau^{t+1}_x,\tau^{t+1}_y)$ are increasing function of $(\tau^{t}_x,\tau^{t}_y)$. Therefore, if from the induction hypothesis $\tau_x^t \leq \tau_x^{t-1}$ and $\tau_y^t \leq \tau_y^{t-1}$, it immediately results that $\tau_x^{t+1} \leq \tau_x^{t}$ and $\tau_y^{t+1} \leq \tau_y^{t}$.
\end{proof}

\subsection{Spatially Coupled Gaussian Measurement Matrices}
In the single terminal case, it has been already observed that with traditional Gaussian matrices, the the required measurement rate for complete recovery of the signal is far from the optimal rate given by the RID and spatial coupling is necessary to reduce the required measurement rate down to RID. The situation is very similar to coding theory where the BP threshold resulted from message passing algorithm is different from the optimal MAP threshold and extra spatial coupling is necessary to approach the optimal rate \cite{rudiger}.
%
 
 We briefly describe the structure of a spatially coupled measurement matrix as in \cite{montanari}. We consider a band diagonal weighting matrix $W$ of dimension $L_r\times L_c$ which is roughly row stochastic, i.e. $\frac{1}{2} \leq \sum_c W_{r,c} \leq 2$. In order to obtain the final measurement matrix we replace every entry $W_{r,c}$ by a i.i.d. $M \times N$ Gaussian matrix with entries having variance $\frac{W_{r,c}}{M}$, thus the final matrix will be $m \times n$ where $m=M L_r$ and $n=N L_c$ and the resulting measurement rate is $\rho=\frac{m}{n}=\frac{M L_r}{N L_c}$. 
 Figure \ref{band_diagonal}, borrowed from \cite{montanari}, shows a typical structure of a band diagonal matrix.
 
\begin{figure}[t!]
  \centering
  \includegraphics[width=0.5\textwidth] {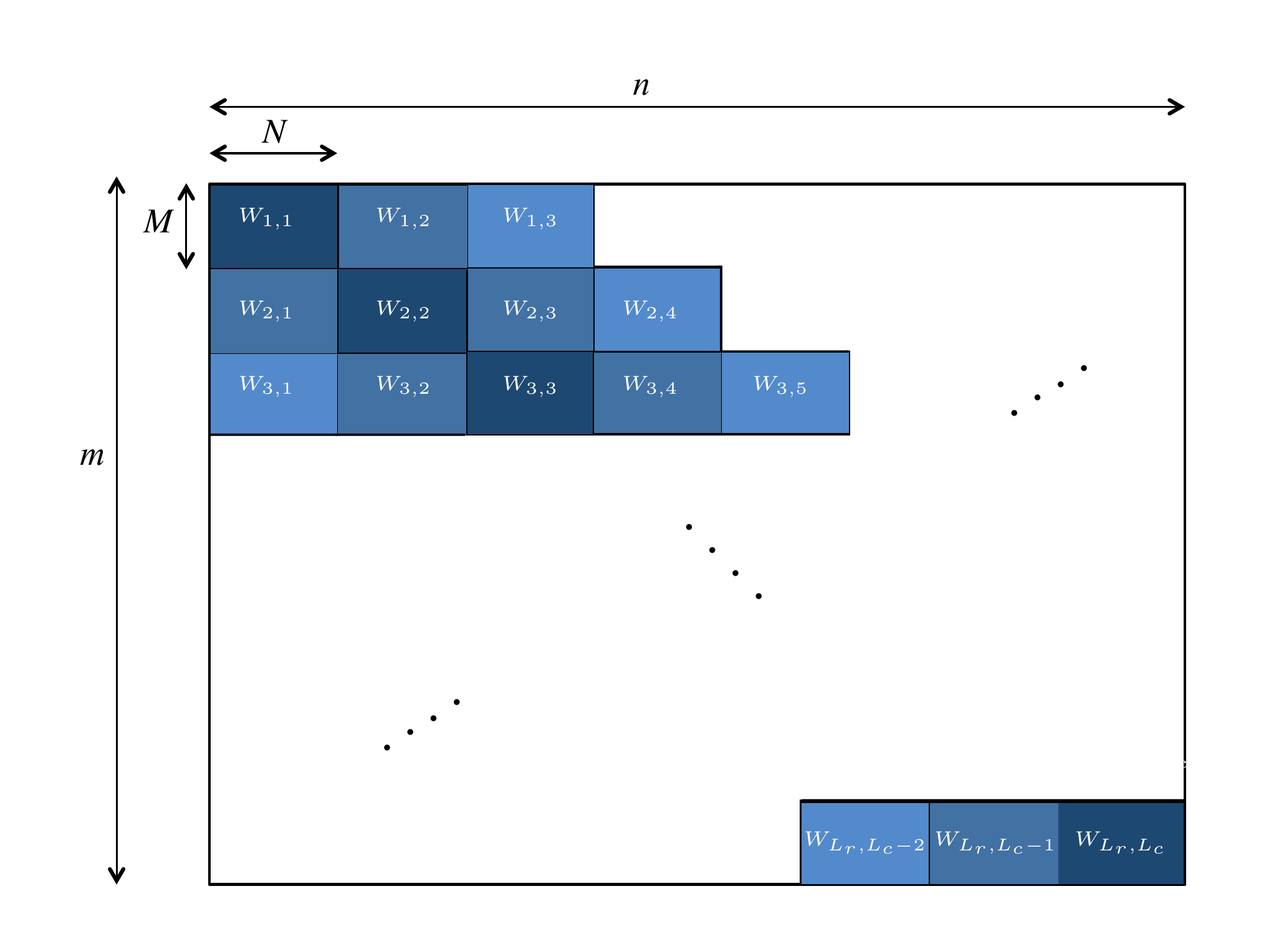}
    \vspace{-2mm}
  \caption{The Structure of A Band Diagonal Gaussian Matrix with Non-homogenous Entry Variances.}
    \label{band_diagonal}
\end{figure}

Each component of $W_{r,c}$ corresponds to one block containing an $M \times N$ matrix. Following the notations of \cite{montanari}, let $\col=\{1,2,\dots,L_c\}$ and $\row=\{1,2,\dots,L_r\}$ denote the row and column indices of these blocks. Let us define the following operators
\begin{align*}
\mmse_x(s_x,s_y)=\mmse(X|\sqrt{s_x}X+ Z_x, \sqrt{s_y}Y + Z_y),\\
\mmse_y(s_x,s_y)=\mmse(Y|\sqrt{s_x}X+ Z_x, \sqrt{s_y}Y + Z_y).
\end{align*}
In the two terminal case, for simplicity, we will use the same weight matrix in both terminals and the final measurement rate for each terminal can be controlled by the aspect ratio $\delta_x=\frac{M_x}{N_x}$ and $\delta_y=\frac{M_y}{N_y}$ of the corresponding sub matrices. 

\begin{definition}
For a roughly stochastic matrix of dimension $L_r\times L_c$, the state evolution sequence $\{\phi^x(t), \psi^x(t)\}_{t\geq 0}$ and $\{\phi^y(t), \psi^y(t)\}_{t\geq 0}$, $\phi^o(t)=(\phi^o_a(t))_{a \in \row}, \psi^o(t)=(\psi^o_i(t)) _{i \in \col}$  with $o \in \{x,y\}$ is defied as follows $\psi^o_i(0)=\infty, \  \ i \in \col$ 
and for all $t \geq 0$,
\begin{align}
\phi^o_a(t)&=\sigma_o^2 + \frac{1}{\delta_o} \sum_{i \in \col} W_{a,i} \psi^o_i(t),\label{sc_state1}\\
\psi^o_i(t+1)&=\mmse_o(\sum_{b \in \row} W_{b,i} \phi_b^x (t) ^{-1}, \sum_{b \in \row} W_{b,i} \phi_b^y (t) ^{-1}).\label{sc_state2}
\end{align}
where $\sigma_o^2$ is the variance of the measurement noise and $\delta_o=\frac{M_o}{N_o}$ is the measurement rate of the sub-matrices for terminal $o\in \{x,y\}$.
\end{definition}
Quantities $\psi_i(t)$ and $\phi_a(t)$ correspond to the asymptotic MSE of the MAMP. In particular, $\psi_i(t)$ is the asymptotic MSE of the variables located in block $i \in \col$ and $\phi_a(t)$ is the noise variance in the residual terms corresponding to row $a \in \row$ as we will explain later. 
Using $\{\phi, \psi\}$ sequence for each terminal it is possible to define the following MAMP algorithm. Let $Q^t$ be an $m \times n$ whose $i,j$ component is given by 
\begin{align}
Q^t_{i j}=\frac{\phi_r(t)^{-1}}{\sum _{k=1}^{L_r} W_{k c} \phi_k(t)^{-1}}\label{Qdef}
\end{align}
where $r$ is the row index of the measurement $i$ and $c$ is the column index of the variable $j$, thus it is a block constant matrix. We also define the MMSE threshold functions as 
\begin{align*}
\eta^x_{t,i}(g_i,h_i)= \E(X| X+s^x_i(t)^{-1} Z_x=g_i, Y+s^y_i(t) ^{-1} Z_y=h_i),
\end{align*}
with $s^o_i(t)=\sum_{u \in \row} W_{u,c} \phi^{o}_u(t)^{-1}$, where $c$ is the column index of variable $i$. MMSE estimator for $T_Y$ is defined similarly. We also assume that both of these estimator apply component wise, i.e. $\eta_{t}(g^l,h^l)=(\eta_{t,1}(g_1,h_1), \dots, \eta_{t,l}(g_l,h_l))$. With these notations the MAMP can be written as follows
\begin{align}
x^{t+1}&=\eta^x_t(x^t+ (Q_x^t \odot A)^* r_x^t, y^t +(Q_y^t \odot B)^*r_y^t),\label{sc_MAMP11}\\
r_x^t&=\vecu-A x^t + b_x^t \odot r_x^{t-1},\label{sc_MAMP12}\\
y^{t+1}&=\eta^y_t(x^t+ (Q_x^t \odot A)^* r_x^t, y^t +(Q_y^t \odot B)^*r_y^t),\label{sc_MAMP21}\\
r_y^t&=\vecv-B y^t + b_y^t \odot r_y^{t-1},\label{sc_MAMP22}
\end{align}
where $A$ and $B$ denote the spatially coupled measurement matrices, $\vecu=A x, \vecv=By$ are the measurements, $r_x$ and $r_y$ are residual terms, $Q_x$ and $Q_y$ are defined according to Equation \eqref{Qdef} and $b_x$ and $b_y$ are defined as follows.
Let $C(c)$ denote all the variables $i$ with column index $c \in \col$ and let
\begin{align*}
\bracket{\partial_1 \eta^x _t}_c=\bracket{ \partial _1 \eta^x_t(x_i^t+ ((Q_x^t \odot A)^* r_x^t)_i, y_i^t +((Q_y^t \odot B)^*r_y^t)_i}
\end{align*}
where the average is taken over all variables belonging to the the column block $c$. We define $b_x^t$ as a column vector of length $m$ which takes the same value for all components belonging to a row block $r\in \row$ and is defined as follows
\begin{align*}
b_{x,i}^t=\frac{1}{\delta_x} \sum_{c \in \col} W_{r_i, c} \tilde{Q}^{t-1}_{a_i,c} \bracket{\partial _1 \eta^x_{t-1}}_c,
\end{align*}
where $r_i$ is the row block that $i$ belongs to and $\tilde{Q}^{t}$ is a $L_r \times L_c$ matrix defined by $\tilde{Q}^t_{r,c}=Q^t_{x,\, ij}$ for any $i$ that belong to the row block $r$ and any column that belongs to the column block $c$. Notice that $Q^t$ itself is also block constant therefore it is not important which $i$ or $j$ is taken from the block.
 A similar expression holds for the $b_y^t$ by replacing $\partial _1 \eta_x^t$ by $\partial_2 \eta_y^t$, $Q_x^t$ by $Q_y^t$ and $\delta_x$ by $\delta_y$.

Using a similar steps as in \cite{montanari}, it is possible to show that the performance of the MAMP algorithm can be described by the state evolution given in Equation \eqref{sc_state1} and \eqref{sc_state2} where the number of states is equal to $2(L_r+L_c)$.

\begin{thm}
Let $(x^n,y^n)$ be a two terminal signal and let $\vecu=A x^n$ and $\vecv=B y^n$, where $A$ and $B$ are spatially coupled matrices with the same weight matrix $W$. Let $(x^t,y^t)$ be the output of MAMP algorithm in Equations \eqref{sc_MAMP11}-\eqref{sc_MAMP22}, where $\{\phi^o(t), \psi^o(t)\}_{t \geq 0, o \in \{x,y\}}$ is obtained from the SE equation \eqref{sc_state1}-\eqref{sc_state2}. Asymptotically, as $N_x,N_y$ go to infinity
\begin{align*}
\frac{1}{N_x} \sum_{j \in C(i)} (x_j - x^t_j)^2 \to \psi^x_i(t), \frac{1}{N_y} \sum_{j \in C(i)} (y_j - y^t_j)^2 \to \psi^y_i(t).
\end{align*}
\end{thm}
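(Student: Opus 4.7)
The plan is to extend the conditioning technique used in Theorem \ref{thm1} (Bolthausen's method for MAMP) and combine it with the block-wise analysis developed for spatially coupled AMP in \cite{montanari}. The state will now be vector-valued with $2(L_r+L_c)$ components, one pair $(\phi^o_a,\psi^o_i)$ per row block $a\in\row$ and column block $i\in\col$ for each terminal $o\in\{x,y\}$. The quantities $\psi^o_i(t)$ and $\phi^o_a(t)$ are to be interpreted, respectively, as the block-averaged MSE of the estimates inside column block $i$ and the block-averaged residual noise variance inside row block $a$.

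First, I would recast the iteration \eqref{sc_MAMP11}--\eqref{sc_MAMP22} so that the effective observation of a variable $j\in C(i)$ looks like a pair of Gaussian channels with noise variances $s^x_i(t)^{-1}$ and $s^y_i(t)^{-1}$. This is exactly why the reweighting matrices $Q_x^t,Q_y^t$ of \eqref{Qdef} are introduced: after aggregation over row blocks, the contribution from each row block $b$ is weighted by $\phi^o_b(t)^{-1}$, which is the Fisher-information-type normalization that matches the arguments appearing inside $\eta^x_{t,i}$ and $\eta^y_{t,i}$. The Onsager corrections $b_x^t,b_y^t$, built block-wise from $\langle \partial_1\eta^x_{t-1}\rangle_c$ and $\langle \partial_2\eta^y_{t-1}\rangle_c$, are the unique choice that cancels, at leading order, the correlation between $A$ (resp.\ $B$) and the current estimate inside each block.

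Second, I would proceed by induction on $t$. The base case $t=0$ is immediate from the initialization $x^0=y^0=0$, $r_x^{-1}=r_y^{-1}=0$ together with $\psi^o_i(0)=\infty$ encoded as the trivial bound. For the inductive step, condition on the $\sigma$-algebra $\mathfrak{S}_t$ generated by all iterates $(x^s,y^s,r_x^s,r_y^s)_{s\le t}$ together with the measurement constraints $\vecu=Ax$, $\vecv=By$. Following Bolthausen's decomposition, one writes $A\mid\mathfrak{S}_t$ as the sum of its conditional mean (a rank-$t$ correction fixed by the past iterates) plus a residual block-Gaussian component that is independent of $\mathfrak{S}_t$ and whose variance in sub-block $(r,c)$ is still $W_{r,c}/M_x$. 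The analogous decomposition holds independently for $B$. Because $A$ and $B$ are drawn independently, the residual Gaussian components in the two terminals are jointly independent, so the pair of noises injected into $\eta^x_{t,i}$ and $\eta^y_{t,i}$ is conditionally Gaussian with independent $x$- and $y$-components, as required by \eqref{sc_state2}.

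Third, I would push this conditional Gaussianity through the Lipschitz MMSE thresholds and use a pseudo-Lipschitz averaging argument (as in \cite{bayati}) to replace empirical averages over $C(i)$ with expectations. Doing so at iteration $t+1$ yields $\psi^o_i(t+1)=\mmse_o(s^x_i(t),s^y_i(t))$, and the definition \eqref{sc_state1} of $\phi^o_a(t)$ then tracks the block-variance of the residuals after reinjection. The main obstacle will be verifying that the joint thresholding across the two terminals does not introduce a cross-correlation that breaks the state evolution: since $\eta^x_{t,i}$ and $\eta^y_{t,i}$ receive arguments that depend on \emph{both} $A$ and $B$, one must carefully check, block by block, that the leading-order terms cancel thanks to the Onsager corrections $b_x^t,b_y^t$ and that only the correct MMSE expression survives in the limit. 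Once this block-wise cross-terminal cancellation is established, the remainder of the argument parallels the single-terminal spatially coupled analysis of \cite{montanari}, and the induction closes.
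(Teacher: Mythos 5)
Your proposal follows essentially the same route the paper takes: the paper offers no detailed argument for this theorem, merely invoking the block-wise state-evolution analysis of \cite{montanari} (itself the Bolthausen/Bayati--Montanari conditioning technique), extended to two terminals exactly as you outline with the $Q$-reweighting, block-wise Onsager terms, and conditional-Gaussian induction over the $2(L_r+L_c)$ states. Your sketch is consistent with, and in fact more explicit than, what the paper provides.
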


Based on the results proved in \cite{montanari} for the single terminal case and the lower bound provided in \cite{HAT2}, it is possible to give the following characterization for the achievable measurement rate region in the multi-terminal case.

\begin{thm}
Let $(X,Y)$ be a linearly correlated two terminal source and let $\rho_x, \rho_y \in [0,1]$ be such that
\begin{align}
\rho_x > d(X|Y),\, \rho_y> d(Y|X),\, \rho_x+\rho_y > d(X,Y).\label{sw_region}
\end{align}
There is an ensemble of spatially coupled measurement matrices that separately captures the signals in the two terminals and an MAMP algorithm that jointly recovers the signals in each terminal with a negligible  distortion.
\end{thm}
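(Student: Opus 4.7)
The plan is to reduce the achievability claim to an analysis of the fixed points of the two-dimensional state evolution \eqref{sc_state1}--\eqref{sc_state2}; the matching converse is already supplied by \cite{HAT2}. Since the preceding theorem identifies $\psi^x_i(t), \psi^y_i(t)$ as the per-block MSEs of the MAMP output, it suffices to exhibit a band-diagonal weight matrix $W$ for which both state sequences converge uniformly in $i$ to zero whenever the rate conditions of the theorem hold with strict inequality.

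Following the strategy of \cite{montanari}, I would first introduce a joint replica-type potential $\mathcal{F}(\tau_x,\tau_y;\rho_x,\rho_y)$ associated with the uncoupled scalar recursion \eqref{se_mmse1}--\eqref{se_mmse2}, whose stationary points in $(\tau_x,\tau_y)$ coincide with the fixed points of the SE. Using the I-MMSE relation, $\mathcal{F}$ can be expressed through the joint mutual information $I(X,Y;\,X+\sqrt{\tau_x}Z_x,Y+\sqrt{\tau_y}Z_y)$ together with logarithmic penalty terms of the form $-\tfrac{\rho_x}{2}\log(1/\tau_x)-\tfrac{\rho_y}{2}\log(1/\tau_y)$, scaled so that the marginal stationarity conditions reproduce exactly \eqref{se_mmse1} and \eqref{se_mmse2}. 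The coupled SE \eqref{sc_state1}--\eqref{sc_state2} will then be interpretable as a spatially mollified gradient descent on the same $\mathcal{F}$, summed along column blocks.

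The next step is to show that $(0,0)$ is the unique global minimum of $\mathcal{F}$ precisely when all three inequalities in \eqref{sw_region} hold strictly. Here I would rely on the MMSE dimension identity of Wu--Verd\'u together with its joint and conditional extensions on the class of linearly correlated sources developed in Appendix \ref{lin_cor}, which in the high-SNR regime yield
\begin{align*}
s\cdot\mmse(X\mid \sqrt{s}X+Z_x,\, Y)&\to d(X\mid Y),\\
s\cdot\mmse(Y\mid X,\,\sqrt{s}Y+Z_y)&\to d(Y\mid X),
\end{align*}
and an analogous joint statement producing $d(X,Y)$. Feeding these asymptotics into $\mathcal{F}$ along the three critical directions in the $(\tau_x,\tau_y)$-quadrant, namely $\tau_x\downarrow 0$ with $\tau_y$ fixed, $\tau_y\downarrow 0$ with $\tau_x$ fixed, and the diagonal $\tau_x,\tau_y\downarrow 0$, produces respectively the constraints $\rho_x>d(X|Y)$, $\rho_y>d(Y|X)$ and $\rho_x+\rho_y>d(X,Y)$ needed for $\mathcal{F}$ to strictly decrease towards the origin in every direction. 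Hence under \eqref{sw_region} the global minimum of $\mathcal{F}$ is attained uniquely at the perfect-recovery point.

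Finally, I would invoke the threshold-saturation mechanism of Kudekar--Richardson--Urbanke, adapted to AMP in \cite{montanari}, to transfer the global-minimum guarantee for the uncoupled scalar potential to the spatially coupled vector SE: for a sufficiently long, smooth band-diagonal $W$, the fixed point of the coupled recursion coincides with the minimizer of $\mathcal{F}$, so $\psi^x_i(t),\psi^y_i(t)\to 0$ uniformly in $i$ and the theorem follows via the previous theorem. The principal obstacle is precisely this last step, since the standard threshold-saturation argument is formulated for a scalar recursion with a scalar potential, whereas here the SE is two-dimensional and the two coordinates are coupled through the joint MMSE terms $\mmse_x,\mmse_y$. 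I would expect to prove a vector analogue of the saturation lemma by exploiting monotonicity of the joint MMSE in the precision parameters $(s_x,s_y)$ and using a multi-dimensional Lyapunov/potential argument, or, failing a fully vectorial extension, by reducing the dynamics to a one-parameter family along suitably chosen rays in $(\tau_x,\tau_y)$-space on which the scalar theory applies.
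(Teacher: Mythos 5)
The decisive step of your argument --- transferring the global-minimum property of the uncoupled potential to the spatially coupled system --- is exactly the point you leave open, and it is a genuine gap rather than a technicality. The threshold-saturation machinery you invoke (Kudekar--Richardson--Urbanke, and its AMP adaptation in \cite{montanari}) is proved for a \emph{scalar} state evolution with a scalar potential; here the state is the pair $(\tau_x,\tau_y)$ (after coupling, the vector $\{\phi^x,\phi^y,\psi^x,\psi^y\}$), and the two coordinates interact through the joint MMSE functions $\mmse_x,\mmse_y$. No vector analogue of the saturation lemma is available in the paper or its references, and your fallback of restricting the dynamics to rays in $(\tau_x,\tau_y)$-space does not work as stated: the recursion \eqref{se_mmse1}--\eqref{se_mmse2} moves $\tau_x$ and $\tau_y$ at different, state-dependent rates, so rays are not invariant and a one-parameter reduction cannot capture the coupled fixed points. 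Moreover, your identification of $(0,0)$ as the \emph{global} minimizer of $\mathcal{F}$ rests only on high-SNR limits along three directions; such directional behavior near the origin does not exclude a lower-lying stationary point in the interior of the quadrant, which is precisely what a threshold-saturation argument would need to rule out.

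The paper sidesteps any vector potential theory by a successive-decoding (corner-point) argument. Since the MMSE estimators give $\mmse_x(s_x,s_y)\le\mmse_x(s_x,0)$ and the SE map is monotone, the $\psi^x$ sequence of the two-terminal coupled system is dominated by the $\psi^x$ sequence of a \emph{single-terminal} spatially coupled AMP, which by \cite{montanari} vanishes whenever $\rho_x>d(X)$; once $\phi^x\to 0$, the $Y$-chain reduces to a single-terminal coupled SE with perfect side information, $\psi^y_i(t+1)=\mmse_y(\infty,\cdot)$, and the scalar result applies again with threshold $\limsup_{s\to\infty}s\,\mmse(Y|X,\sqrt{s}Y+Z_y)=d(Y|X)$. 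This yields the corner point $(d(X),d(Y|X))$, and symmetrically $(d(X|Y),d(Y))$; the dominant face is then obtained by diagonally concatenating $r$ copies of one corner-point ensemble with $s$ copies of the other, and every remaining point of \eqref{sw_region} dominates a point of the dominant face, so monotonicity finishes the proof. To salvage your route you would have to prove a genuinely two-dimensional threshold-saturation theorem with a vector Lyapunov functional, a substantially harder task than the reduction the paper actually uses.
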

\begin{remark}
The optimal measurement rate region given by Equation \eqref{sw_region}, is very similar to the Slepian-Wolf rate region for distributed source coding where the RID in the compressed sensing setting plays a role similar to the discrete entropy in distributed source coding.
\end{remark}
\begin{proof}
We prove that the corner points $(d(X),d(Y|X))$ and $(d(Y),d(X|Y))$ are achievable under MAMP. In the single terminal case, If $\rho_x> d(X)$ asymptotically the signal in $T_X$ can be recovered with a negligible distortion. In multi-terminal case, if we consider only the terms related to $T_X$, from Equation \eqref{sc_state1}-\eqref{sc_state2}, we have 
\begin{align*}
\phi^x_a(t)&= \frac{1}{\rho_x} \sum_{i \in \col} W_{a,i} \psi^x_i(t),\\
\psi^x_i(t+1)&=\mmse_x(\sum_{b \in \row} W_{b,i} \phi_b^x (t) ^{-1}, \sum_{b \in \row} W_{b,i} \phi_b^y (t) ^{-1}).
\end{align*}
As $\eta_x, \eta_y$ are MMSE estimators, from Data Processing inequality, one can check that $\mmse_x(s_x,s_y)$ and $\mmse_y(s_x,s_y)$ are decreasing functions of $s_x$ and $s_y$. This implies that $\mmse_x(\sum_{b \in \row} W_{b,i} \phi_b^x (t) ^{-1}, \sum_{b \in \row} W_{b,i} \phi_b^y (t) ^{-1})$ is less than or equal to $\mmse_x(\sum_{b \in \row} W_{b,i} \phi_b^x (t) ^{-1}, 0)$, which is equal to the variance of the MMSE estimator for $X$ which does not use the information of $Y$. One can also check that SE equation is increasing with respect to $\psi_i^x(t)$, which implies that the $\psi^x$ sequence for the MAMP is dominated by the $\psi^x$ sequence of a single terminal AMP, which converges to $0$ for any $\rho_x > d(X)$. If $\psi^x_i(t)$ converges to zero so does the $\phi^x_a$ sequence, thus the SE equation for $T_Y$ will be as follows

\begin{align*}
\phi^y_a(t)&= \frac{1}{\rho_y} \sum_{i \in \col} W_{a,i} \psi^y_i(t),\\
\psi^y_i(t+1)&=\mmse_y(\infty, \sum_{b \in \row} W_{b,i} \phi_b^y (t) ^{-1}),
\end{align*}
which using the same steps as in the single terminal case, can be proved to converge to zero provides that 
\begin{align*}
\rho_y > \limsup_{s \to \infty} s \mmse(Y|X,\sqrt{s}Y + Z_y)=d(Y|X),
\end{align*}
where $Z_y$ is a zero mean unit variance Gaussian noise and where we used the fact that for the class of linearly correlated signals that we use, $d(Y|X)$ is well defined. 

Similarly, it is possible to prove that $(\rho_x,\rho_y)=(d(X|Y),d(Y))$ is also achievable. Furthermore, any point on the dominant face is also achievable because if we consider two ensembles of measurement matrices $(A_1,B_1)$ and $(A_2,B_2)$ with rate vectors $\vec{R}_1=(d(X),d(Y|X))$ and $\vec{R}_2=(d(Y),d(X|Y))$ achieving the two corner points respectively, by diagonally concatenating $r$ copies of the former with $s$ copies of the latter, one can get an ensemble with measurement rate $\frac{r}{r+s} \vec{R}_1 + \frac{s}{r+s}\vec{R}_2$ and a negligible distortion. 

The other points on the region are also achieved because their measurement rate is larger than or equal to the measurement rate of at least one point on the dominant face, thus their distortion will be asymptotically negligible as well. 
\end{proof}

\section{Simulation Results}\label{sim}
\subsection{Signal Model}
For simulation, we will use a linearly correlated random vector from $\lin_2$ whose independent constituents are random variables with Bernoulli-Gaussian distribution. Let $Z^k$ be a sequence of independent random variables with probability distribution $p_i(z)=(1-\alpha_i) \delta_0(z) + \alpha_i \normal(0, \frac{1}{\alpha_i})$ where $\delta_0$ is a delta measure at point zero and $\normal(0,\sigma^2,z)$ denotes a zero mean Gaussian distribution with variance $\sigma^2$. One can simply check that $\var (Z_i)=1$ and $d(Z_i)=\alpha_i$. Let $\Phi$ be a $2 \times k$ real-valued matrix. The two terminal linearly correlated source is given by $\Phi Z^k$. As explained in Section \ref{RID_subsection}, the joint and conditional RID of this source is well-defined. Notice that depending on the values of $\alpha_i$ and the structure of the matrix $\Phi$, this model can cover a wide variety of correlations between the signals in two terminals. In Appendix \ref{mmse_bg}, we have obtained a closed form expression for the MMSE estimator $(\eta^x, \eta^y)$ of this source in presence of the Gaussian measurement noise which we will use as a denoising (threshold) function in MAMP algorithm.

\subsection{Performance without Spatial Coupling}
In this section, we use the message passing algorithm given by Equations \eqref{res_x}-\eqref{threshold_y} to recover a linearly correlated Bernoulli-Gaussian signal for the noiseless case where there is no measurement noise.

\subsubsection{{\bf Comparison of the Empirical Results and SE predictions}}\label{comparison}
We consider a very simple case where $Z_1,Z_2,Z_3$ are three Bernoulli-Gaussian random variables with $d(Z_1)=d(Z_3)=0.2$ and $d(Z_2)=0.3$. The signal for the two terminals is given by $X=Z_1+Z_2$ and $Y=Z_2+Z_3$, thus $Z_1$ and $Z_3$ are the private parts of the signals and $Z_2$ is the common part which creates correlation between $X$ and $Y$. It is easy to check that $d(X)=d(Y)=0.44$ and $d(X|Y)=d(Y|X)=0.248$. 

Figure \ref{figx_05_06}, \ref{figy_05_06} show the simulation results for $\rho_1=0.5, \rho_2=0.6$. It is seen that there is a good match between the empirical variance of the estimator and the predictions of the SE. Moreover, the algorithm can not fully recover the signal which means that the SE equation has a fixed point other than $(\tau_x,\tau_y)=(0,0)$. The simulations has been repeated in Figure \ref{figx_05_07}, \ref{figy_05_07} by increasing the measurement rate of the $T_Y$ from $\rho_2=0.6$ to $0.7$. Plots show that this time MAMP algorithm successfully recovers the signal of both terminals. It is also important to notice that because of the  correlation between the terminals, increasing $\rho_2$ is helpful for recovering the signal in $T_Y$.
\begin{figure}[t!]
  \centering
  \includegraphics[width=0.5\textwidth] {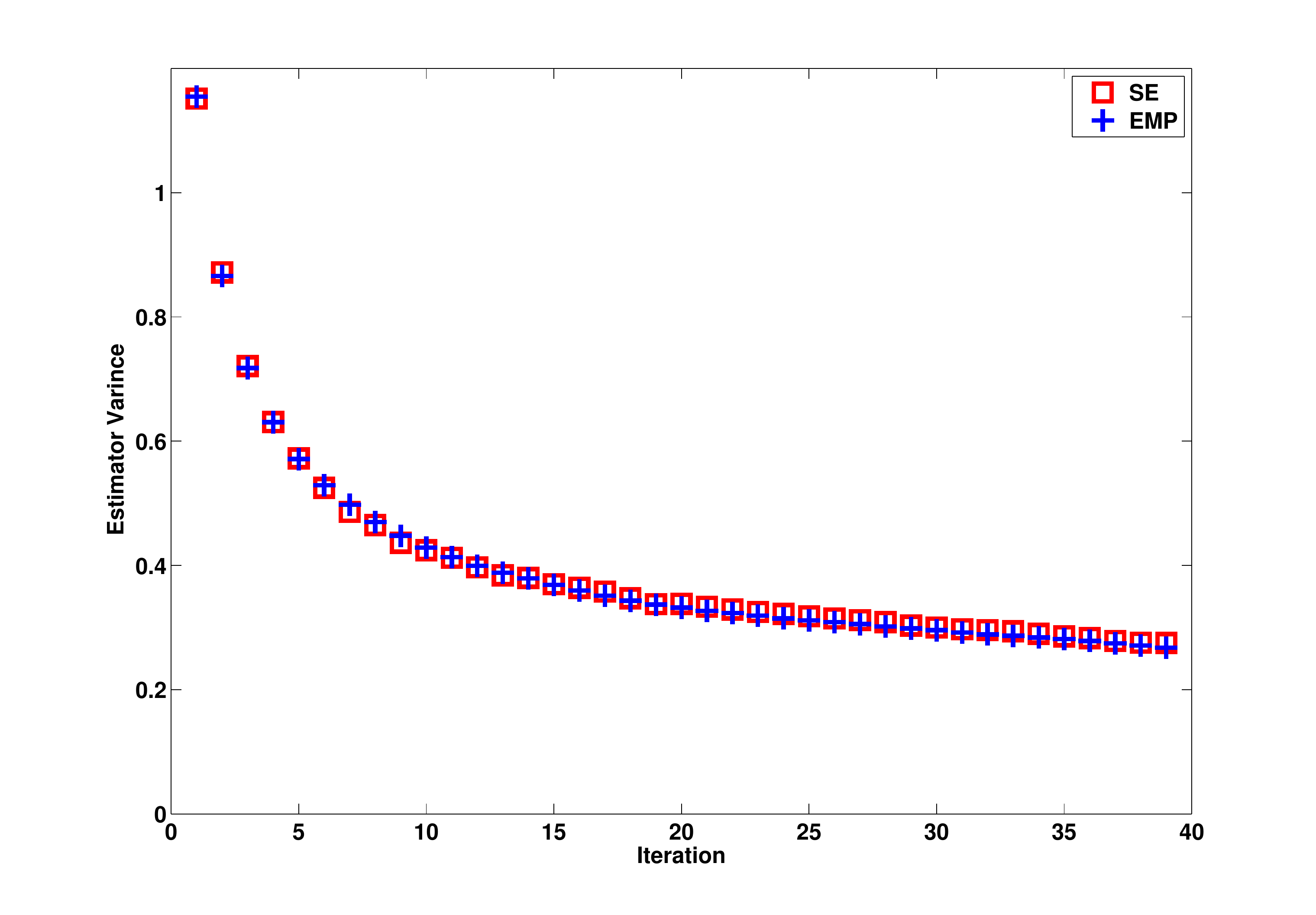}
  \vspace{-2mm}
  \caption{Empirical and SE Result for $T_X$ for $\rho_1=0.5, \rho_2=0.6$}
  \label{figx_05_06}
\end{figure}

\begin{figure}[t!]
  \centering
  \includegraphics[width=0.5\textwidth] {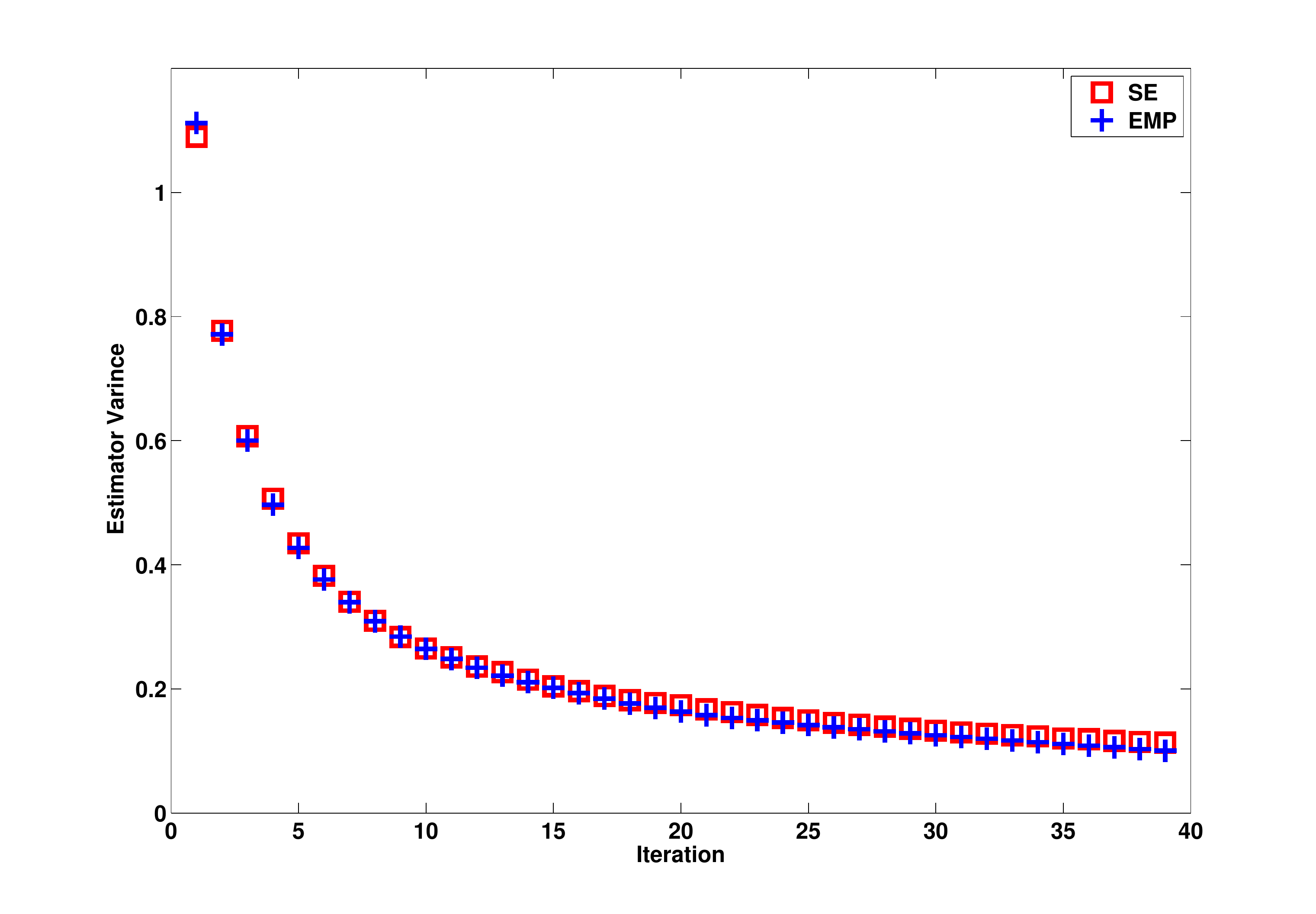}
    \vspace{-2mm}
  \caption{Empirical and SE Result for $T_Y$ for $\rho_1=0.5, \rho_2=0.6$}
    \label{figy_05_06}
\end{figure}

\begin{figure}[t!]
  \centering
  \includegraphics[width=0.5\textwidth] {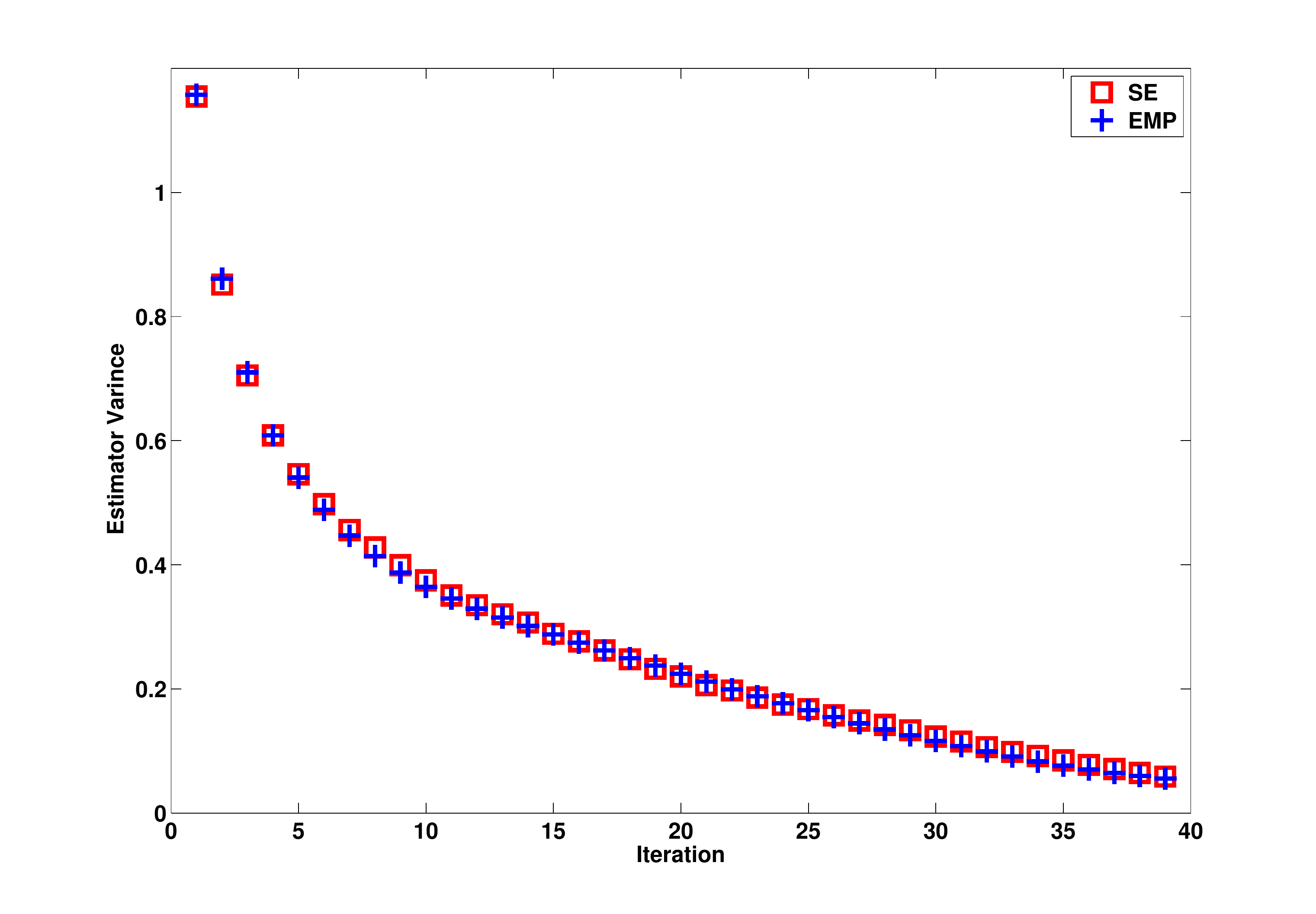}
    \vspace{-2mm}
  \caption{Empirical and SE Result for $T_X$ for $\rho_1=0.5, \rho_2=0.7$}
    \label{figx_05_07}
\end{figure}

\begin{figure}[t!]
  \centering
  \includegraphics[width=0.5\textwidth] {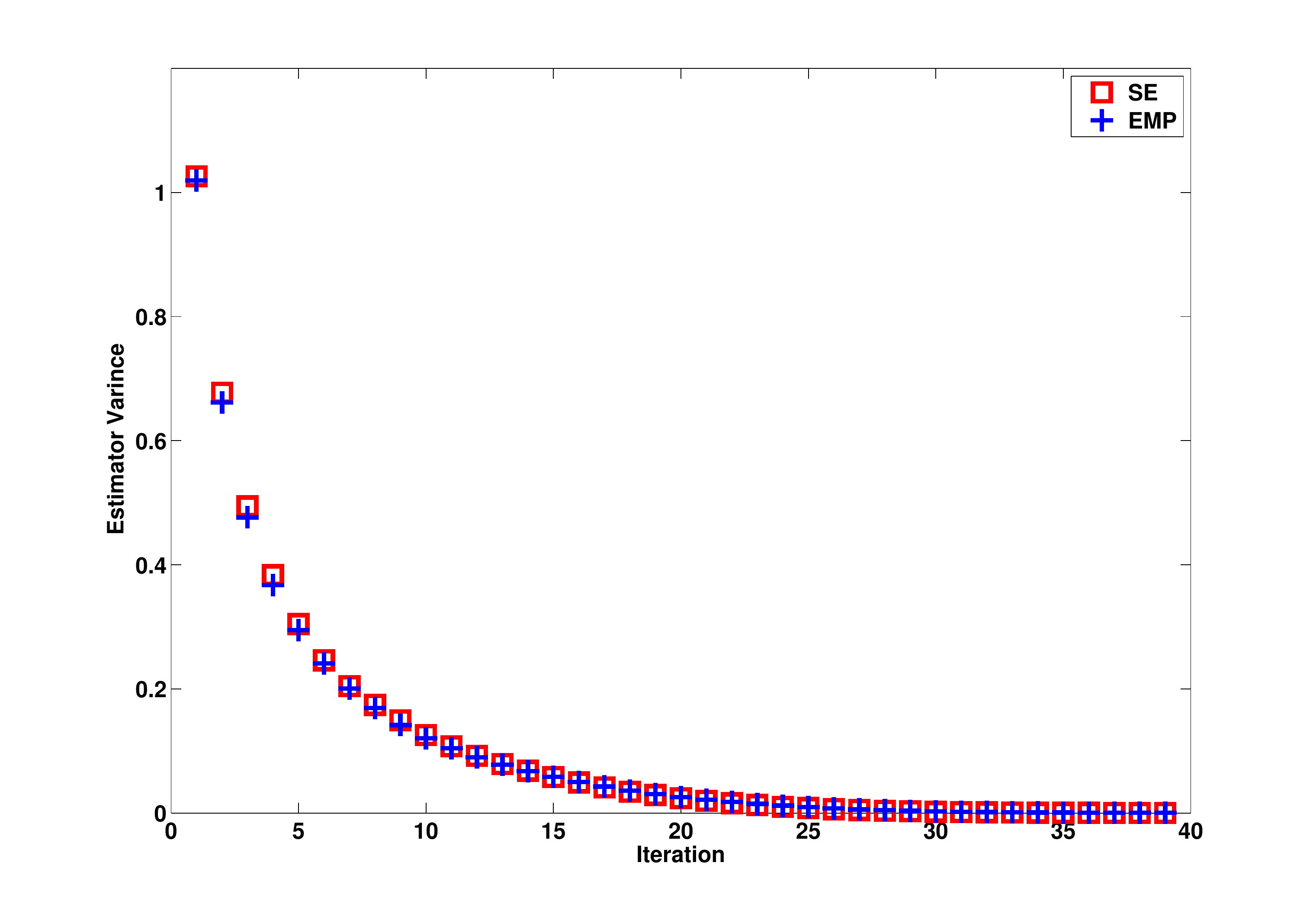}
    \vspace{-2mm}
  \caption{Empirical and SE Result for $T_Y$ for $\rho_1=0.5, \rho_2=0.7$}
    \label{figy_05_07}
\end{figure}

\subsubsection{{\bf Rate-Distortion Region}}
In this part, we run the MAMP algorithm for the same signal as in Section \ref{comparison} for different measurement rates. As a distortion measure, we consider the average of the mean square error of the two terminals. Figure \ref{rate_dist1},  \ref{rate_dist2}, \ref{rate_dist3} show a contour plot of the Rate-Distortion curve for three sources with the same individual but different conditional RID. The dashed lines show the boundary of the optimal pentagon. Low distortion recovery is not possible outside of this region. 

In the extreme case where the signals in two terminals are independent from each other, i.e. there is no common signal, the pentagon region reduces to a square region. On the contrary, if there is no private signal then the signals in both terminals are the same and the problem is reduced to a simple single terminal problem.  Obviously in this case, beacause of the independence of measurement matrices in the two terminals, individual measurement rates are not important as far as their sum is large enough. This can be seen from Figure \ref{rate_dist1},  \ref{rate_dist2}, \ref{rate_dist3} where we keep $d(X)=d(Y)=0.44$ but gradually increase the share of the common signal where as a result $d(X|Y)$ and $d(Y|X)$ start to decrease. It is observed that the contour lines gradually become parallel with $\rho_x+\rho_y=\text{constant}$.  

Notice that there is a huge gap between the low-distortion curve (distortion equal to $0.1$) and the optimal region. As we will see this gap is filled by using spatial coupling and running MAMP.

\begin{figure}[t!]
  \centering
  \includegraphics[width=0.5\textwidth] {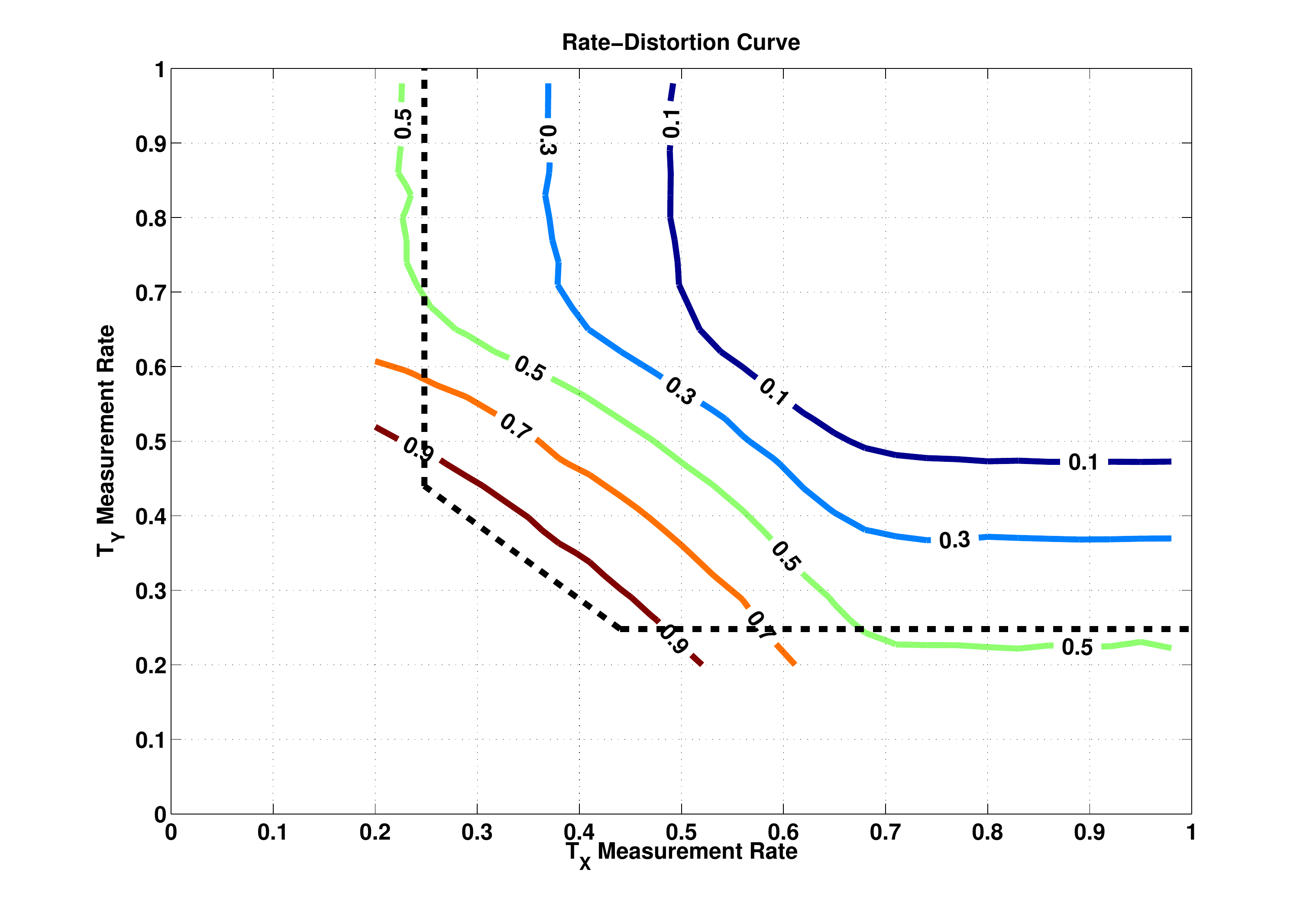}
    \vspace{-2mm}
  \caption{Rate-Distortion region for a linearly correlated Bernoulli-Gaussian source with $d(X)=d(Y)=0.44$ and $d(X|Y)=d(Y|X)=0.248$. The dashed lines show the boundaries of the optimal region.}
    \label{rate_dist1}
\end{figure}

\begin{figure}[t!]
  \centering
  \includegraphics[width=0.5\textwidth] {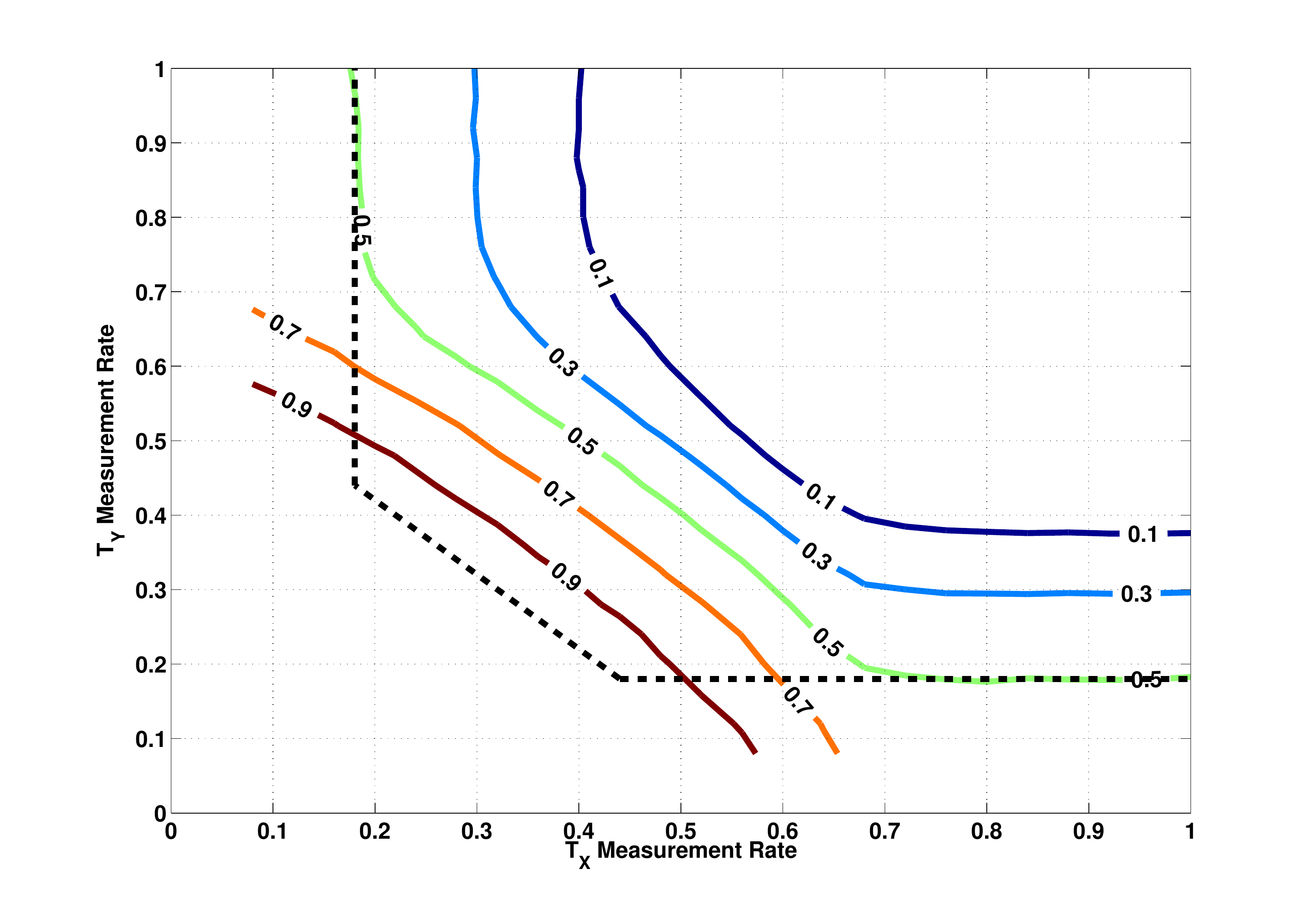}
    \vspace{-2mm}
  \caption{Rate-Distortion Region for the Linearly Correlated Bernoulli-Gaussian Source with $d(X)=d(Y)=0.44$ and $d(X|Y)=d(Y|X)=0.1802$.}
    \label{rate_dist2}
\end{figure}

\begin{figure}[t!]
  \centering
  \includegraphics[width=0.5\textwidth] {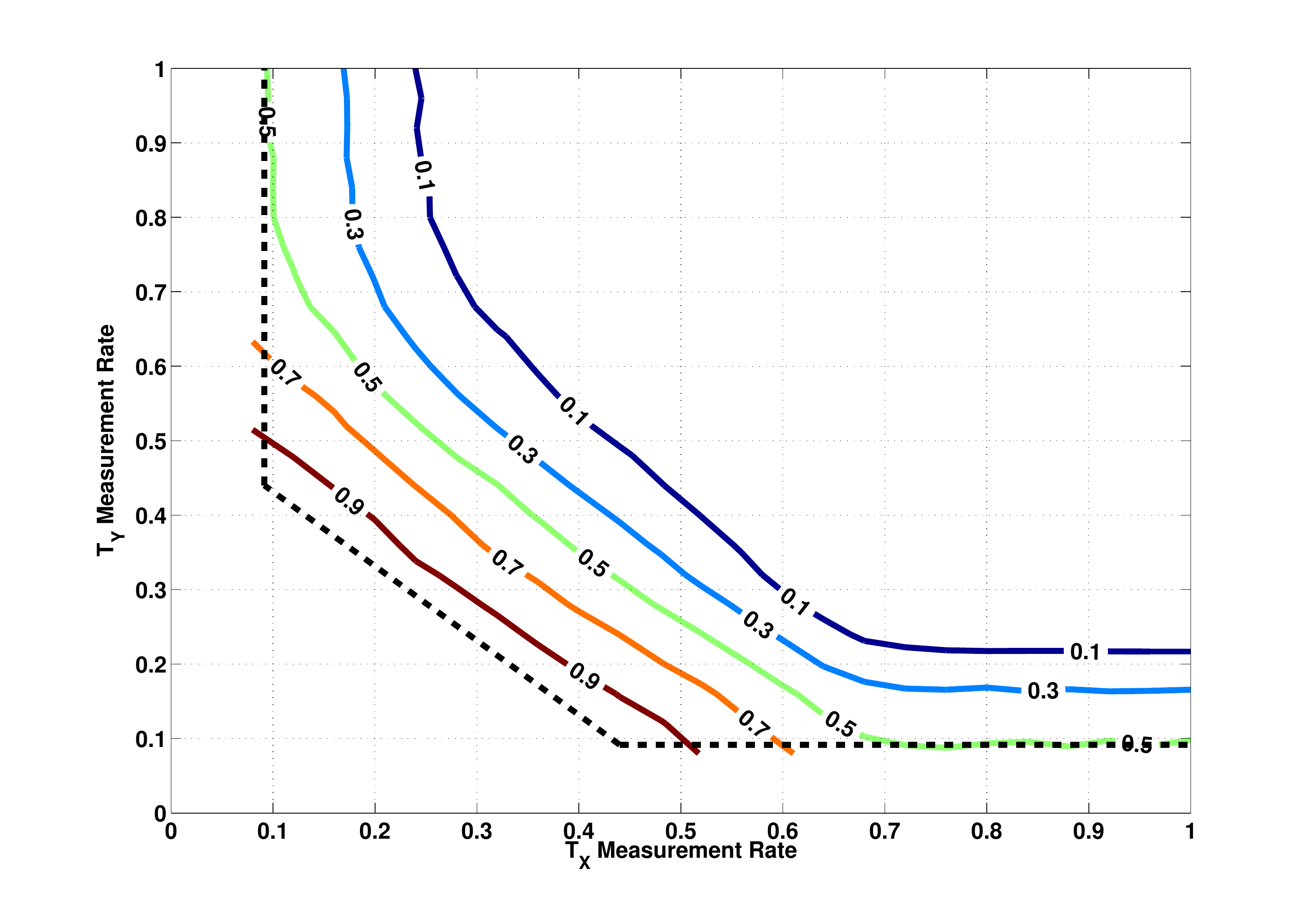}
    \vspace{-2mm}
  \caption{Rate-Distortion region for a linearly correlated Bernoulli-Gaussian source with $d(X)=d(Y)=0.44$ and $d(X|Y)=d(Y|X)=0.0916$. The dashed lines show the boundary of the optimal region.}
    \label{rate_dist3}
\end{figure}

\subsubsection{{\bf Effect of Correlation between the Terminals}}
In order to investigate the effect of correlation between the two terminals, we have plotted a low distortion contour of the three sources with the same $d(X)=d(Y)=0.44$ but three different conditional RID $0.248$, $0.1820$ and $0.0916$. Decreasing the conditional RID while fixing the individual entropy, make the signals in two terminals more correlated. A low distortion curve of the three sources is plotted in Figure \ref{increasing_correlation}. The plot shows that the required measurement rate is decreasing by increasing the correlation.
\begin{figure}[t!]
  \centering
  \includegraphics[width=0.5\textwidth] {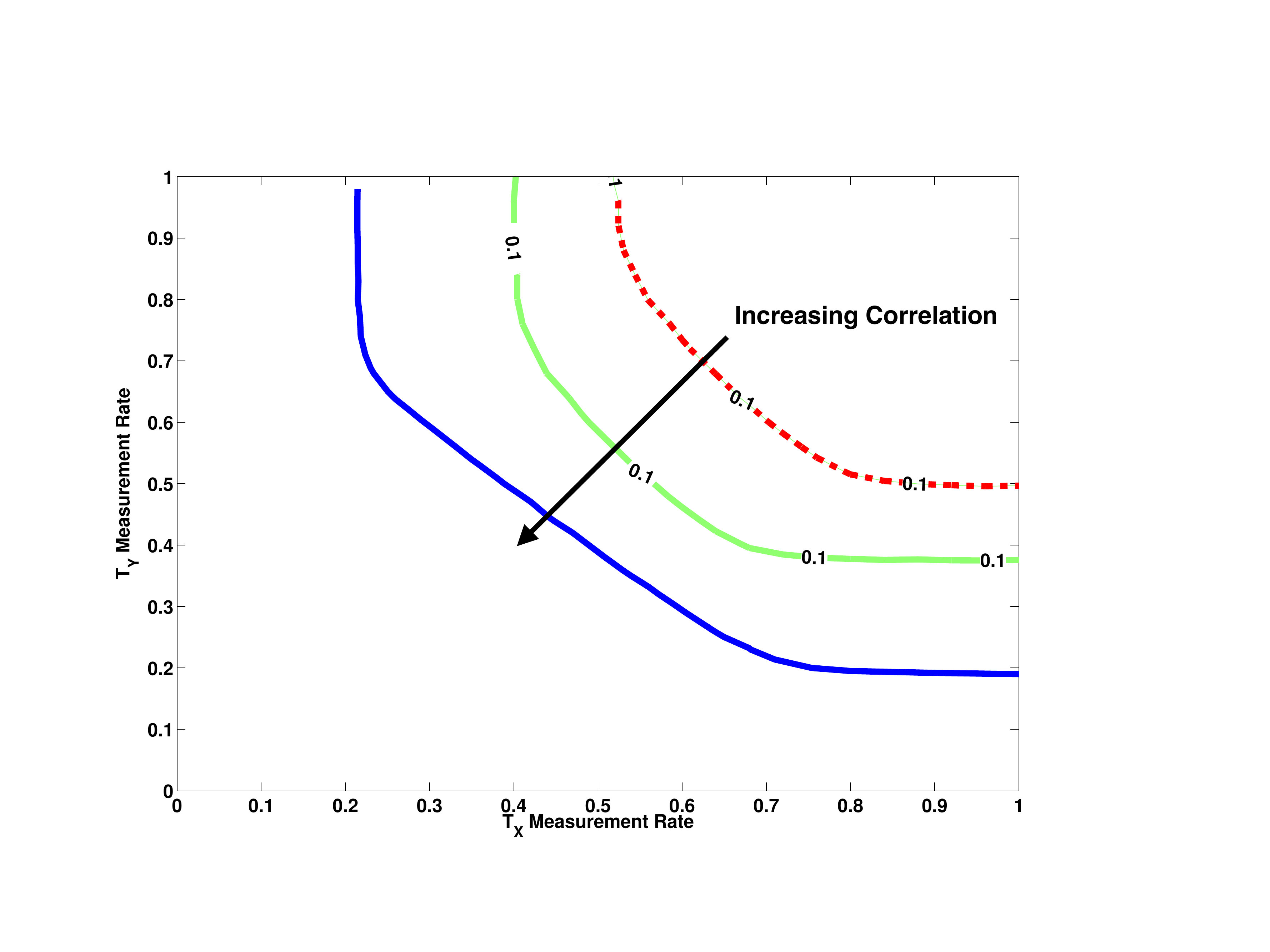}
    \vspace{-2mm}
  \caption{Effect of Correlation on Measurement Rate Region. The low distortion curve of three different two terminal sources with the same individual RID is plotted. The required measurement region of the more correlated source is dominated by that of the less correlated one.}
    \label{increasing_correlation}
\end{figure}

\subsection{Performance with Spatial Coupling}
In this section, we simulate the SE equation for MAMP algorithm. We consider the same source as in Section \ref{comparison} where $d(X)=d(Y)=0.44$ and $d(X|Y)=d(Y|X)=0.248$. In order to approach the corner point $(d(X),d(Y|X))$, we consider a measurement rate with $10$ percent oversampling, i.e. $\rho_x=1.1 d(X)$ and $\rho_y=1.1 d(Y|X)$. The simulation results has been shown in Figure \ref{SC_wave1}. Similar to the single terminal case, one can observe a wave-like phenomenon which starts from the boundary variables and proceeds towards the center recovering the variables gradually. In particular, to create the initial wave at the boundary one needs to oversample the boundary variables. Figure \ref{SC_wave4} depicts the simulation results for another experiment where $\rho_x$ is kept fixed but $\rho_y$ is reduced. It is observed that, this time spatial coupling wave proceeds to decode the variables in $T_X$ however the initially generated wave in $T_Y$ stops after a while and can not proceed to recover the all the variables in $T_Y$. 

By checking the results for non-spatially coupled case, one can see that the resulting MSE error decreases gradually by increasing the measurement rate.  On the contrary, in the spatially coupled case, either wave proceeds and recovers all the variables or it stops, thus asymptotically, there is a sharp transition in the resulting MSE in terms of measurement rate. 

For the same source, we have done the simulations to find boundary of the phase transition. Figure \ref{SE_MAMP} depicts the simulation result.

\begin{figure}[t]
  \centering
  \includegraphics[width=0.5\textwidth] {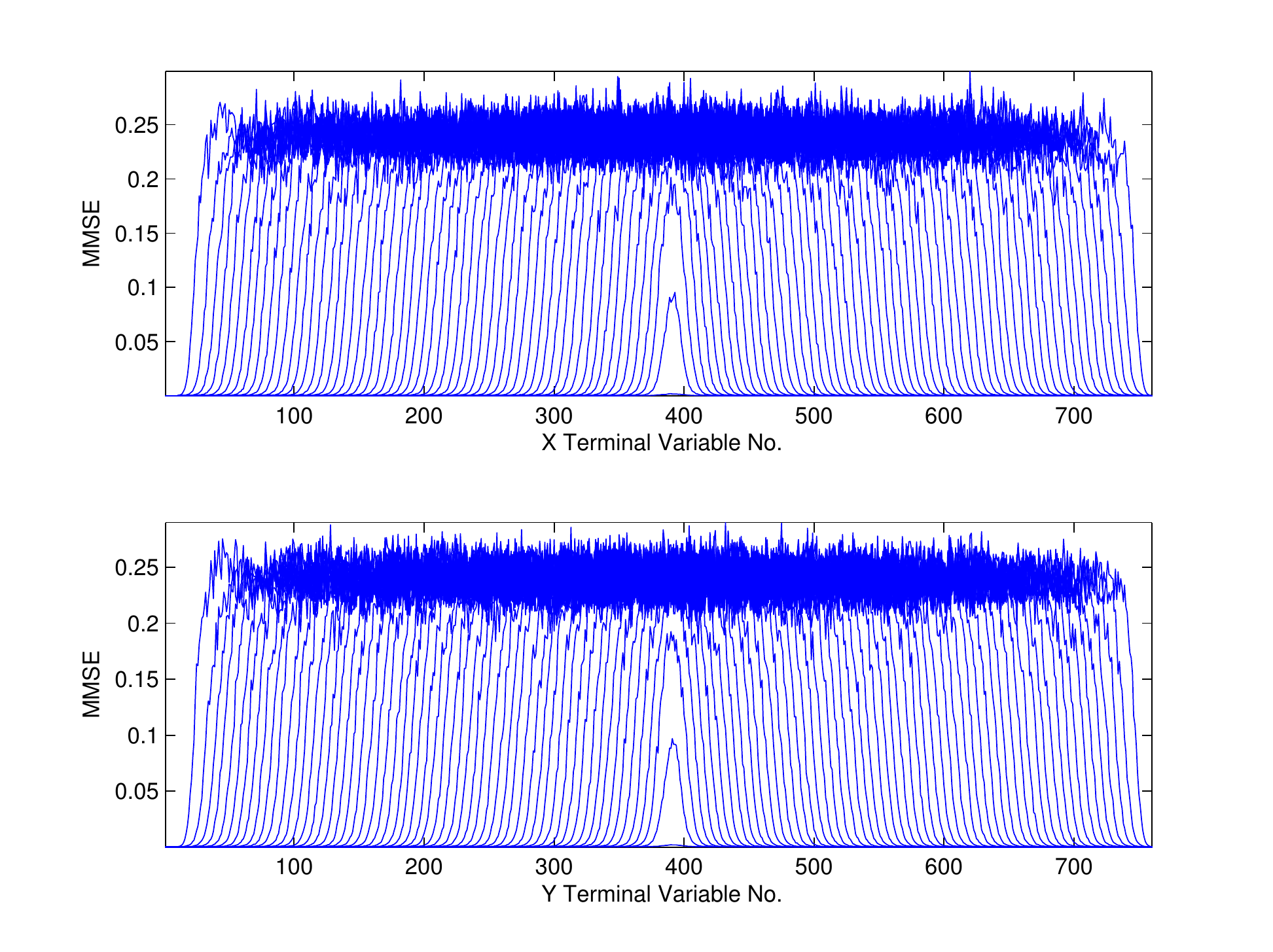}
    \vspace{-2mm}
  \caption{Spatial Coupling Wave for A Linearly Correlated Source with $\rho_x=1.1 d(X)$ and $\rho_y=1.1 d(Y|X)$}
    \label{SC_wave1}
\end{figure}

\begin{figure}[t]
  \centering
  \includegraphics[width=0.5\textwidth] {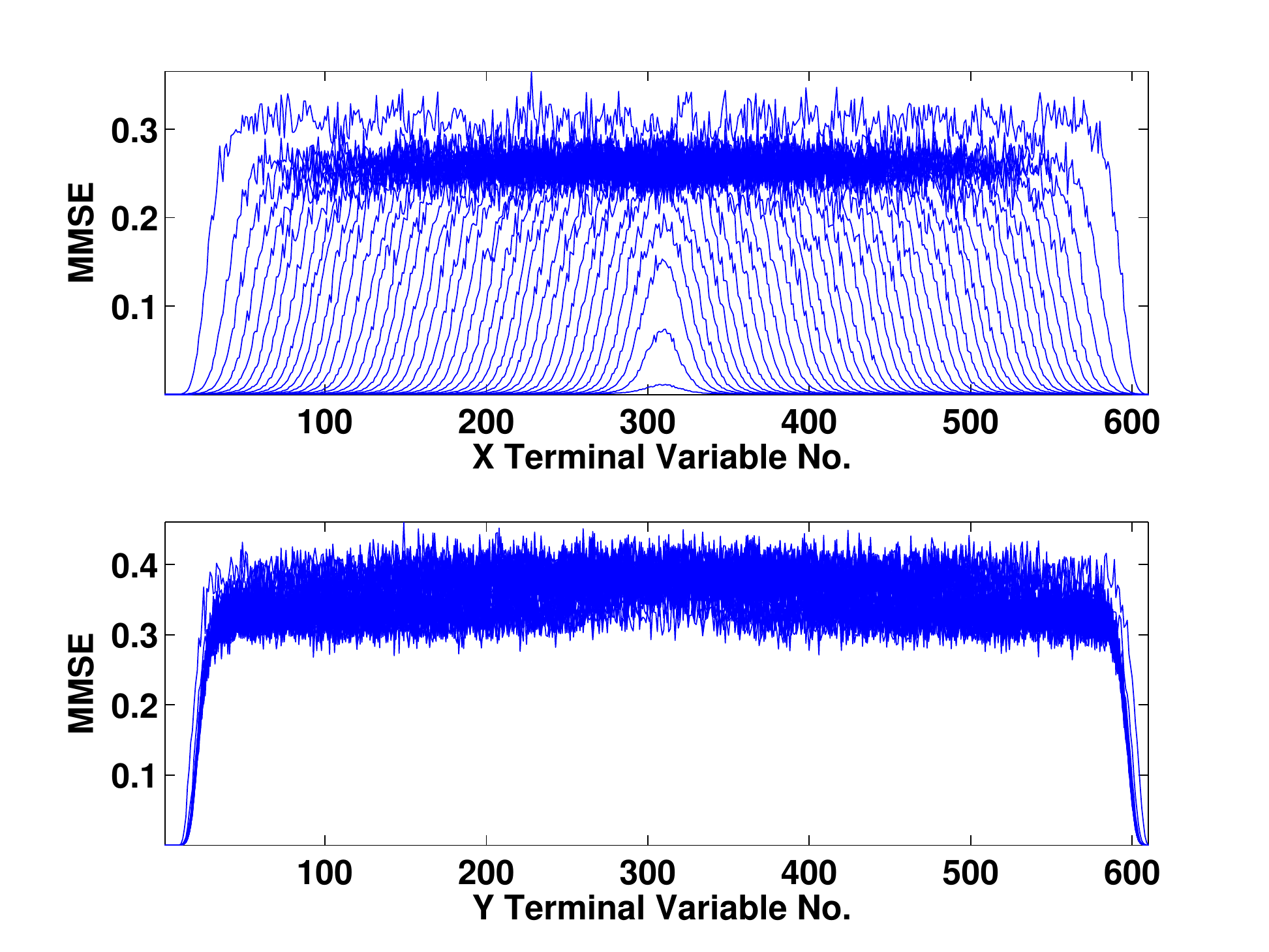}
    \vspace{-2mm}
  \caption{Spatial Coupling Wave for A Linearly Correlated Source with $\rho_x=1.1 d(X)$ and $\rho_y< d(Y|X)$}
    \label{SC_wave4}
\end{figure}

\begin{figure}[t]
  \centering
  \includegraphics[width=0.5\textwidth] {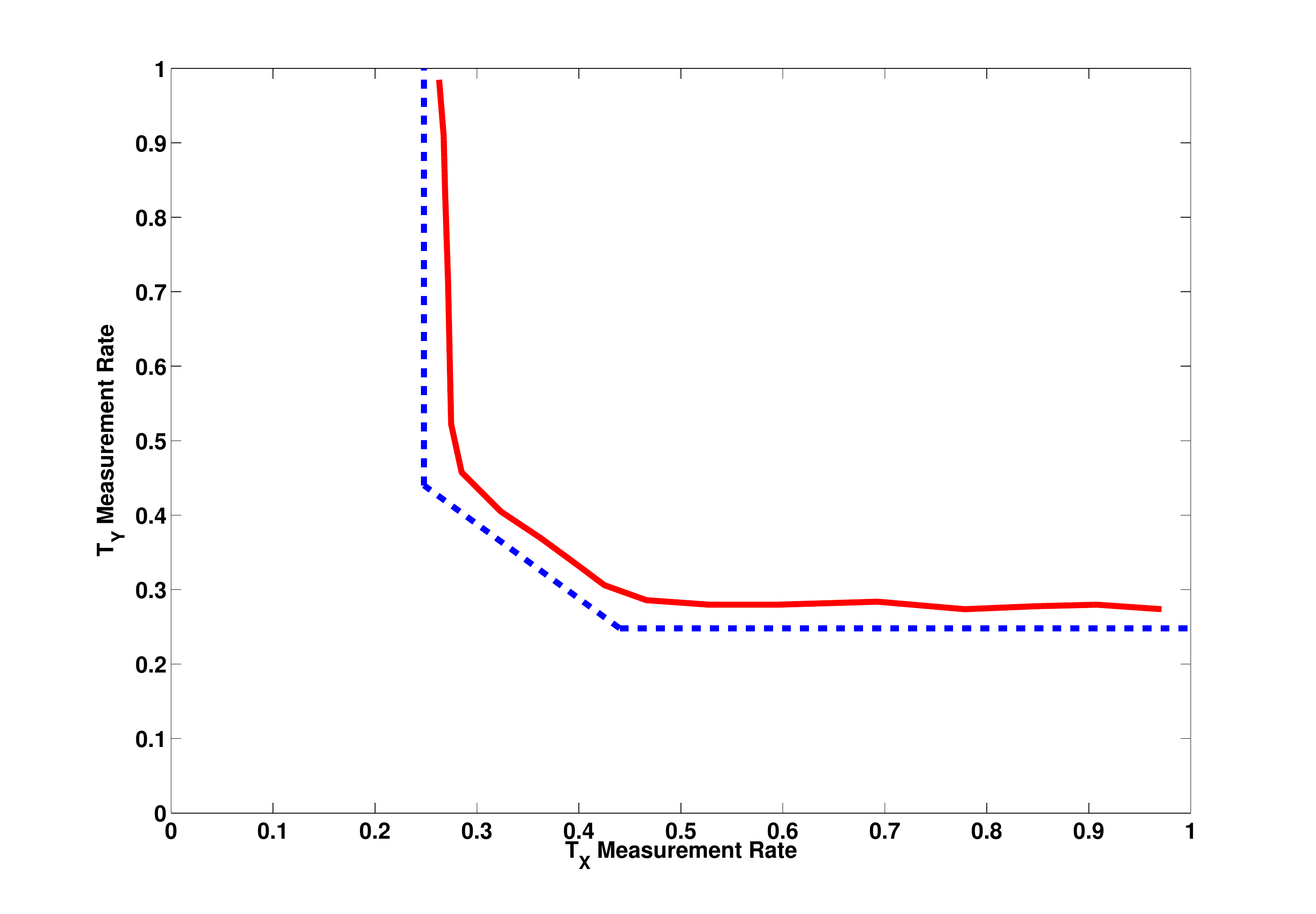}
    \vspace{-2mm}
  \caption{Phase Transition Boundary for MAMP and Comparison with SE Prediction. Dashed curve shows the theoretical boundary of the achievable measurement rate region.}
    \label{SE_MAMP}
\end{figure}

\bibliographystyle{IEEEtran} 
\bibliography{bibliography}

\begin{thebibliography}{10}
\providecommand{\url}[1]{#1}
\csname url@samestyle\endcsname
\providecommand{\newblock}{\relax}
\providecommand{\bibinfo}[2]{#2}
\providecommand{\BIBentrySTDinterwordspacing}{\spaceskip=0pt\relax}
\providecommand{\BIBentryALTinterwordstretchfactor}{4}
\providecommand{\BIBentryALTinterwordspacing}{\spaceskip=\fontdimen2\font plus
\BIBentryALTinterwordstretchfactor\fontdimen3\font minus
  \fontdimen4\font\relax}
\providecommand{\BIBforeignlanguage}[2]{{%
\expandafter\ifx\csname l@#1\endcsname\relax
\typeout{** WARNING: IEEEtran.bst: No hyphenation pattern has been}%
\typeout{** loaded for the language `#1'. Using the pattern for}%
\typeout{** the default language instead.}%
\else
\language=\csname l@#1\endcsname
\fi
#2}}
\providecommand{\BIBdecl}{\relax}
\BIBdecl

\bibitem{DCS1}
D.~Baron, M.~F. Duarte, S.~Sarvotham, M.~B. Wakin, and R.~G. Baraniuk, ``An
  information-theoretic approach to distributed compressed sensing,'' in
  \emph{Proc. 45rd Conference on Communication, Control, and Computing}, 2005.

\bibitem{DCS2}
M.~Fornasier and H.~Rauhut, ``Recovery algorithms for vector-valued data with
  joint sparsity constraints,'' \emph{SIAM Journal on Numerical Analysis},
  vol.~46, no.~2, pp. 577--613, 2008.

\bibitem{CS1}
E.~J. Cand{\`e}s, J.~Romberg, and T.~Tao, ``Robust uncertainty principles:
  Exact signal reconstruction from highly incomplete frequency information,''
  \emph{Information Theory, IEEE Transactions on}, vol.~52, no.~2, pp.
  489--509, 2006.

\bibitem{CS2}
D.~L. Donoho, ``Compressed sensing,'' \emph{Information Theory, IEEE
  Transactions on}, vol.~52, no.~4, pp. 1289--1306, 2006.

\bibitem{rice}
\BIBentryALTinterwordspacing
 Multi-sensor and distributed compressive sensing. [Online]. Available:
  \url{http://dsp.rice.edu/cs}
\BIBentrySTDinterwordspacing

\bibitem{verdu}
Y.~Wu and S.~Verd{\'u}, ``R{\'e}nyi information dimension: Fundamental limits
  of almost lossless analog compression,'' \emph{Information Theory, IEEE
  Transactions on}, vol.~56, no.~8, pp. 3721--3748, 2010.

\bibitem{WV2}
------, ``Optimal phase transitions in compressed sensing,'' 2011.

\bibitem{HAT1}
S.~Haghighatshoar, E.~Abbe, and E.~Telatar, ``Adaptive sensing using
  deterministic partial hadamard matrices,'' in \emph{Information Theory
  Proceedings (ISIT), 2012 IEEE International Symposium on}.\hskip 1em plus
  0.5em minus 0.4em\relax IEEE, 2012, pp. 1842--1846.

\bibitem{HAT2}
S.~Haghighatshoar and E.~Abbe, ``Polarization of the r\'enyi information
  dimension for single and multi terminal analog compression,'' \emph{arXiv
  preprint arXiv:1301.6388}, 2013.

\bibitem{source_polarcode}
E.~Arikan, ``Source polarization,'' in \emph{Information Theory Proceedings
  (ISIT), 2010 IEEE International Symposium on}.\hskip 1em plus 0.5em minus
  0.4em\relax IEEE, 2010, pp. 899--903.

\bibitem{channel_polarcode}
------, ``Channel polarization: A method for constructing capacity-achieving
  codes for symmetric binary-input memoryless channels,'' \emph{Information
  Theory, IEEE Transactions on}, vol.~55, no.~7, pp. 3051--3073, 2009.

\bibitem{montanari}
D.~L. Donoho, A.~Javanmard, and A.~Montanari, ``Information-theoretically
  optimal compressed sensing via spatial coupling and approximate message
  passing,'' in \emph{Information Theory Proceedings (ISIT), 2012 IEEE
  International Symposium on}.\hskip 1em plus 0.5em minus 0.4em\relax IEEE,
  2012, pp. 1231--1235.

\bibitem{mezard1}
F.~Krzakala, M.~M{\'e}zard, F.~Sausset, Y.~Sun, and L.~Zdeborov{\'a},
  ``Statistical-physics-based reconstruction in compressed sensing,''
  \emph{Physical Review X}, vol.~2, no.~2, p. 021005, 2012.

\bibitem{mezard2}
------, ``Probabilistic reconstruction in compressed sensing: algorithms, phase
  diagrams, and threshold achieving matrices,'' \emph{Journal of Statistical
  Mechanics: Theory and Experiment}, vol. 2012, no.~08, p. P08009, 2012.

\bibitem{maleki}
D.~L. Donoho, A.~Maleki, and A.~Montanari, ``Message passing algorithms for
  compressed sensing: I. motivation and construction,'' in \emph{Information
  Theory Workshop (ITW), 2010 IEEE}.\hskip 1em plus 0.5em minus 0.4em\relax
  IEEE, 2010, pp. 1--5.

\bibitem{bayati}
M.~Bayati and A.~Montanari, ``The dynamics of message passing on dense graphs,
  with applications to compressed sensing,'' \emph{Information Theory, IEEE
  Transactions on}, vol.~57, no.~2, pp. 764--785, 2011.

\bibitem{WV3}
Y.~Wu and S.~Verd{\'u}, ``Mmse dimension,'' \emph{Information Theory, IEEE
  Transactions on}, vol.~57, no.~8, pp. 4857--4879, 2011.

\bibitem{renyi}
A.~R{\'e}nyi, ``On the dimension and entropy of probability distributions,''
  \emph{Acta Mathematica Academiae Scientiarum Hungarica}, vol.~10, no. 1-2,
  pp. 193--215, 1959.

\bibitem{rudiger}
S.~Kudekar, T.~J. Richardson, and R.~L. Urbanke, ``Threshold saturation via
  spatial coupling: Why convolutional ldpc ensembles perform so well over the
  bec,'' \emph{Information Theory, IEEE Transactions on}, vol.~57, no.~2, pp.
  803--834, 2011.

\end{thebibliography}

\appendices

\section{Linearly Correlated Random Signals and RID}\label{lin_cor}
Let $\lino$ be a probability space. We define the space $\lin _1$ as the space of all nonsingular scalar random variables. For $k\in \mN, k\geq 2$, $\lin _k$ is the space of all $k$-dimensional random vectors that can be written as a linear combination of finitely many independent nonsingular random variables, i.e. for $X^k \in \lin _k$ there is a $k \times n$ matrix $A$ and $n$ independent nonsingular random variables $Z^n$ such that $X^k=A Z^n$. For this set of random variables, the RID as proved in \cite{HAT2}, is well defined and can be obtained by the following formula 
\begin{align*}
d(X^k)=\E \{ \rank (A[C_\Theta])\},
\end{align*}
where $\{\Theta_i\}_{i=1}^n$ are independent binary random variables with $\pp(\Theta_i=1)=d(Z_i)$, $C_\Theta=\{i\in [n]: \Theta _i=1\}$ is a random subset of $[n]$ denoting the position of nonzero values in $\Theta^n$ and where for a subset $S \in [n]$, $A[S]$ is the matrix consisting of only those columns of $A$ with index in $S$. For example let, $d(Z_1)=d(Z_3)=0.2$ and $d(Z_2)=0.3$, $A=\left ( \begin{matrix} 1 & 1 & 0 \\ 0 & 1 & 1 \end{matrix} \right )$ and $X^2=A Z^3$. In this case one can simply check that $\rank(A[C_\Theta])=\Theta_1+\Theta_2+\Theta_3-\Theta_1 \Theta_2 \Theta_3$ because adding selecting any column adds $1$ unit to the rank unless all of the columns have been selected. Taking the expected value, one gets $d(X^2)=0.7-0.012=0.688$. Similarly, one can show that $d(X_1)=d(X_2)=0.44$. Thus, $d(X_1|X_2)=d(X_2|X_1)=d(X^2)-d(X_1)=0.248$.

The space of linearly correlated random variables is defined as $\lin=\cup _{i=1}^\infty \lin_i$. If $(X^k,Y^l)$ is a $k+l$ dimensional vector in  $\lin _{k+l}$, the conditional RID of $X^k$ given $Y^l$ is defined as follows
\begin{align*}
d(X^k|Y^l)=d(X^k,Y^l)-d(Y^l).
\end{align*}
 In this paper,  for simplicity, we deal only with two terminals and thus two dimensional random vectors in $\lin _2$. In this case, for the two terminal source $(X,Y) \in \lin _2$ there are independent nonsingular random variables $Z^n$ and $a^n , b^n \in \mR^n$ such that $X=\sum _{i \in [n]} a_i Z_i$ and $Y=\sum _{i\in [n]} b_i Z_i$ and the joint and the conditional RID's are also well defined.

\section{Heuristic Derivation of the Multi-Terminal AMP}\label{MT_AMP_Appendix}
In this section, we try to heuristically obtain an approximation to the message passing algorithm given by equations \eqref{MT_MP_first}-\eqref{MT_MP_last}. Our derivation is similar to the heuristic derivation of the single terminal AMP in \cite{bayati}. Intuitively as the measurement matrices $A$ and $B$ are dense,  any two messages emanating from the same check node are only slightly different from each other. The same is true for the messages emanating from a variable node. 
For example, if one considers messages from check nodes to variable nodes 
\begin{align*}
r^t _{a \to i}&=u_a - \sum _{j \in [n] \backslash i} A_{a j} x^t_{j \to a}\\
&=u_a - \sum _{j \in [n]} A_{a j} x^t_{j \to a} + A_{a i} x^t_{i \to a},
\end{align*}
it is seen that for a fixed $a \in [m_x]$, $r^t_{a \to i}$ for different values of $i \in [n]$ are different because of the appearance of the last term $A_{a i} x^t_{i \to a}$ which is of  the order $O(\frac{1}{\sqrt{m_x}}) \approx O(\frac{1}{\sqrt{n}})$ as $m_x$ and $n$ are assumed to be proportional.
Similarly,  considering the messages from variable nodes to check nodes,
\begin{align*}
x^{t+1}_{i\to a}&=\eta^x_t( \sum _{b \in [m_x] \backslash a} A_{b i} r^t_{b \to i}, \sum _{c \in [m_y]} A_{c i} s^t_{c \to i})\\
&=\eta^x_t( \sum _{b \in [m_x]} A_{b i} r^t_{b \to i} - A_{a i} r^t_{a \to i}  , \sum _{c \in [m_y]} A_{c i} s^t_{c \to i}),
\end{align*}
it is observed that for a fixed $i\in [n]$, the difference of messages $x^{t+1}_{i \to a}$ for different values of $a \in [m_x]$ is again of the order $O(\frac{1}{\sqrt{n}})$.
Therefore, one gets   
\begin{align*}
r^t_{a \to i}=r^t _a + \delta r^t_{a \to i}\ ,\ s^t_{c \to k}=s^t _c + \delta s^t_{c \to k},\\
x^t_{i \to a}=x^t_i +\delta x^t_{i \to a}\ ,\  y^t_{i \to a}=y^t_i + \delta y^t_{i \to a},
\end{align*}
where the $\delta$ terms are of the order $O(\frac{1}{\sqrt{n}})$.
Replacing in Equation \eqref{MT_MP_first} and \eqref{MT_MP_second}, one obtains
\begin{align*}
r^t_a + \delta r^t_{a \to i}&=u_a -\sum_{j \in [n]} A_{a j} (x^t _j + \delta x^t_{j \to a})+ A_{a i} (x^t_i + \delta x^t _{i \to a}),\\
s^t_c + \delta s^t _{c \to k}&=v_c -\sum_{l \in [n]} B_{c l} (y^t _l + \delta y^t _{l \to c})+ B_{c k} (y^t _k + \delta^t_{k \to c}).
\end{align*}
The terms $A_{a i} \delta x^t _{i \to a}$ and $B_{c k} \delta y^t_{k \to c}$ are of the order $O(\frac{1}{n})$ and negligible asymptotically. Thus, one obtains that
\begin{align}
r^t_a&=u_a-\sum _{j \in [n]} A_{a j} (x^t_j + \delta x^t_{j \to a})\ ,\ \delta r^t_{i \to a}=A_{a i}\, x^t_i,\label{first_residual_x}\\
s^t_c&=v_c-\sum_{l \in [n]} B_{c l} (y^t_l + \delta y^t_{l \to c})\ ,\ \delta s^t_{k \to c}=B_{c k}\, y^t_k.\label{first_residual_y}
\end{align}
Replacing in Equations \eqref{MT_MP_third} and \eqref{MT_MP_last}, it results that
\begin{align*}
x^{t+1}_i&+\delta x^{t+1}_{i \to a}\\
&=\eta^x_t(\sum_{ b \in [m_x]\backslash a} A_{b i} (r^t_b + A_{b i} x^t_i) , \sum_{d \in [m_y]} B_{d i} (s^t_d + B_{d i} y^t_i))\\
&=\eta^x_t(\sum_{ b \in [m_x]} A_{b i} (r^t_b + A_{b i} x^t_i) , \sum_{d \in [m_y]} B_{d i} (s^t_d + B_{d i} y^t_i))\\
&+ \partial_1\eta^x_t(.,.) A_{a i} (r^t_a + A_{a i} x^t_i).
\end{align*}
This implies that 
\begin{align}
x^{t+1}_i&=\eta^x_t(x^t_i + \sum_{ b \in [m_x]} A_{b i} r^t_b\ ,\  y^t _i + \sum_{d \in [m_y]} B_{d i} s^t_d),\label{final_1}\\
\delta x^{t+1}_{i \to a}&=\partial_1 \eta^x_t(x^t_i + \sum_{ b \in [m_x]} A_{b i} r^t_b\ ,\  y^t _i + \sum_{d \in [m_y]} B_{d i} s^t_d) A_{a i} r^t_a,\label{second_residual_x}
\end{align}
where one uses the fact that for any $i \in [n]$, $\sum _{a \in [m_x]} A_{a i} ^2 \approx 1$, and $A_{a i} \delta r^t_{a \to i}=O(\frac{1}{n})$ thus negligible as $n$ tends to infinity. A similar argument holds for the $T_Y$giving 
\begin{align}
y^{t+1}_k&=\eta^y_t(x^t_k + \sum_{ b \in [m_x]} A_{b k} r^t_b\ ,\  y^t _k + \sum_{d \in [m_y]} B_{d k} s^t_d),\label{final_2}\\
\delta y^{t+1}_{k \to c}&=\partial_2 \eta^y_t(x^t_k + \sum_{ b \in [m_x]} A_{b k} r^t_b\ ,\  y^t _k+ \sum_{d \in [m_y]} B_{d k} s^t_d) B_{c k} r^t_c.\label{second_residual_y}
\end{align}
Replacing \eqref{second_residual_x} in \eqref{first_residual_x} and \eqref{second_residual_y} in \eqref{first_residual_y}, and using the approximation $A_{a i}^2\approx \frac{1}{m_x}$ and $B_{ c k}^2\approx \frac{1}{m_y}$, one obtains that
\begin{align}
&r^t_a=u_a - A_{a} x^t+ \frac{\sum_{j\in [n]}  \partial_1 \eta^x_t(x^{t-1} _j+ \dots, y^{t-1}_j + \dots) }{m_x} r^{t-1}_a\nonumber\\
&=u_a -  A_{a} x^t + \frac{\bracket{\partial_1 \eta^x_t(x^{t-1} + A^* r^{t-1}, y^{t-1}+B^*s^{t-1})}}{\rho_x} r^{t-1}_a, \label{final_3}
\end{align}
where $A_a$ denotes the $a$-th row of the matrix $A$. Similarly,
\begin{align}
&s^t_c=v_c - B_{c } y^t+ \frac{\sum_{l\in [n]}  \partial_2 \eta^y_t(x^{t-1} _l+ \dots, y^{t-1}_l + \dots) }{m_y} s^{t-1}_c\nonumber \\
&=v_c - B_{c} y^t + \frac{\bracket{\partial_2 \eta^y_t(x^{t-1} + A^* r^{t-1}, y^{t-1}+B^*s^{t-1})}}{\rho_y} s^{t-1}_c.\label{final_4}
\end{align}
Equations \eqref{final_1}, \eqref{final_2}, \eqref{final_3} and \eqref{final_4} give the the MAMP algorithm.

\section{Heuristic Derivation of the State Evolution}
To give an intuitive justification (as in \cite{bayati}) for the validity of SE for the two terminal AMP in Equations \eqref{se_mmse1} and \eqref{se_mmse2}, consider the following version of the AMP where at each iteration $t$, the measurement matrices $A$ and $B$ are replaced with independent copies and where we drop the Onsager term in Equations \eqref{res_x} and \eqref{res_y}. In other words, let $\vecu^t=A(t)\vecx_0 + w_x$ and $\vecv^t=B(t)\vecy_0 + w_y$ be the noisy measurements at iteration $t$, where $w_x$ and $w_y$ are additive noises consisting of i.i.d. zero mean with variance $\sigma_x^2$ and $\sigma_y^2$ respectively. The new AMP algorithm can be written as follows
\begin{align*}
&r^t=\vecu^t- A(t) x^t\ ,\  x^{t+1}=\eta^x_t(A(t)^* r^t + x^t,B(t)^* s^t + y^t),\\
&s^t=\vecv^t-B(t) y^t\ ,\  y^{t+1}=\eta^y_t(A(t)^* r^t + x^t,B(t)^* s^t + y^t).
\end{align*}
The first equation can be simplified to the following form
\begin{align*}
x^{t+1}=\eta^x_t( &\vecx_0 + A(t)^* w_x +(I-A(t)^* A(t)) (x^t-\vecx_0),\\
&\vecy_0 +B(t)^* w_y+(I-B(t)^*B(t))(y^t-\vecy_0)).
\end{align*}
Conditioned on $w_x$, $A(t)^* w_x$ is an $n$ dimensional vector with i.i.d. Gaussian components with zero mean and variance $\frac{\|w_x\|_2^2}{n} \approx \sigma_x^2$. Moreover, in the asymptotic limit as $n$ gets large, by central limit theorem, each row of $I-A(t)^*A(t)$ consists of approximately Gaussian random variables with variance $\frac{n}{m_x}=\frac{1}{\rho_x}$. Similarly, the components of $B(t)^* w_y$ are i.i.d. Gaussian with zero mean and approximate variance $\sigma_y^2$ and  each row of $I-B(t)^*B(t)$ converges to independent zero mean Gaussian  variables with variance $\frac{n}{m_y}=\frac{1}{\rho_y}$. Hence, the components of $A(t)^* w_x+ (I-A(t)^*A(t)) (x^t-\vecx_0)$ are approximately Gaussian with variance 
\begin{align}
\tau^t_x=\sigma_x^2 + \frac{1}{\rho_x} \frac{\|x^t-\vecx_0\|_2^2}{n}. \label{tau^t_x}
\end{align}
 At $t=0$, with the initialization $x^0=0$, one obtains that 
 \begin{align*}
 \tau^0_x=\sigma_x^2 + \frac{1}{\rho_x} \frac{\|\vecx_0\|_2^2}{n} \to \sigma_x^2 + \frac{1}{\rho_x} \E(X^2),
 \end{align*}
which is compatible with the SE initialization. A similar derivation gives $\tau^0_y=\sigma_y^2 + \frac{1}{\rho_y} \E(Y^2)$. Moreover, by induction on $t$, one can simply check that at iteration $t+1$, $$x^{t+1}=\eta^x_t(X+\sqrt{\tau^t_x} Z_x , Y+ \sqrt{\tau^t_y} Z_y).$$
Thus, replacing in Equation \eqref{tau^t_x} and using a similar argument, one obtains that for the iteration $t+1$, 
\begin{align*}
\frac{\|x^{t+1} - \vecx_0\|_2^2}{n} \to \E(X - \eta^x_t(X+\sqrt{\tau^t_x} Z_x , Y+ \sqrt{\tau^t_y} Z_y))^2,
\end{align*}
which implies that at iteration $t+1$:
\begin{align*}
\tau^{t+1}_x=\sigma_x^2 + \frac{1}{\rho_x} \E(X - \eta^x_t(X+\sqrt{\tau^t_x} Z_x , Y+ \sqrt{\tau^t_y} Z_y))^2.
\end{align*}
A similar argument gives the corresponding equation for $\tau_y^t$:
\begin{align*}
\tau^{t+1}_y=\sigma_y^2 + \frac{1}{\rho_y} \E(Y - \eta^y_t(X+\sqrt{\tau^t_x} Z_x , Y+ \sqrt{\tau^t_y} Z_y))^2.
\end{align*}

\section{MMSE Estimator for a Linearly Correlated Bernoulli-Gaussian Signal}\label{mmse_bg}
Suppose $Z^k$ are independent Bernoulli-Gaussian random variables with probability distribution $p_i(z)=(1-\alpha_i) \delta_0(z) + \alpha_i \normal(0, \frac{1}{\alpha_i},z)$. Let $A$ be a $t \times k$ matrix and let $S=A Z^k$ be a $t$ dimensional linearly correlated signal. Suppose $O=S+\tilde{N}$ is the observation vector, where $\tilde{N}$ is a $t\times 1$ zero mean Gaussian measurement noise with a covariance matrix $\tilde{\Sigma}$. We denote by $\eta_i(x)=\E(S_i| O=x)$ the MMSE estimator of $S_i$, the $i$-th component of the signal, given a $t\times 1$ observation vector $O=x$. We will compute $\eta_1(x)$. The other estimators can be computed similarly. 

It is easy to check that one can represent $Z_i$, $i\in[k]$ by $\Theta_i N_i$, where $\Theta^k$ are independent binary random variables with $\pp(\Theta_i=1)=\alpha_i$ and $N^k$ are independent zero mean Gaussian variables with variance $\frac{1}{\alpha_i}$. Assume that $\Sigma$ is the covariance matrix of $N^k$ with diagonal elements $\Sigma_{ii}=\frac{1}{\alpha_i}$ and zero elsewhere. Let $a_1$ denote the first row of $A$ and assume that for a given binary sequence $\theta^k$ and for an arbitrary $n \times k$ matrix $B$, $B(\theta^k)$ denotes an $n \times k$ matrix whose $i$-th column is the $i$-th column of $B$ provided $\theta_i=1$ and zero otherwise. 

Using the conditioning on $\Theta^k$, we have
\begin{align*}
\eta_1(x)&=\sum_{\theta^k \in \{0,1\}^k} \E(S_1|O=x, \Theta^k=\theta^k) \pp(\theta^k|O=x).
\end{align*}
Conditioned on $\theta^k$, $S_1=a_1(\theta^k) N$ is a zero mean Gaussian with variance $a_1(\theta^k) \Sigma\, a_1(\theta^k)^*$. The observation vector is also Gaussian with a zero mean and a covariance matrix $A(\theta^k) \Sigma\, A(\theta^k)^* + \tilde{\Sigma}$, thus the estimation of $S_1$ is reduced to a Gaussian estimation problem where the estimator is known to be a linear function of observation. 
Let $\hat{S}_1(\theta^k,x)=a(\theta^k) \Sigma A(\theta^k) ( A(\theta^k) \Sigma A(\theta^k)^* + \tilde{\Sigma})^{-1} x$. It is easy to check that $\E(S_1|O=x, \Theta^k=\theta^k)=\hat{S}_1(\theta^k,x)$. Therefore, one obtains
\begin{align*}
\eta_1(x)&=\sum_{\theta^k}  \hat{S}_1(\theta^k,x)\pp(\theta^k|O=x)\\
&=\frac{1}{p_o(x)} \sum_{\theta^k}  \hat{S}_1(\theta^k,x)\pp(\theta^k) p_o(x|\theta^k)\\
&=\frac{\sum_{\theta^k}  \hat{S}_1(\theta^k,x)\pp(\theta^k) \normal(0, A(\theta^k) \Sigma A(\theta^k)^* + \tilde{\Sigma},x)}{\sum_{\theta^k} \pp(\theta^k) \normal(0, A(\theta^k) \Sigma A(\theta^k)^* + \tilde{\Sigma},x)},
\end{align*}
where $\normal(\mu, C, x)= \frac{1}{\sqrt{(2\pi)^n \det(C)}} \exp(-\frac{1}{2} (x-\mu)^* C^{-1} (x-\mu))$ denotes the Gaussian distribution with mean $\mu$ and covariance matrix $C$.
\end{document}